\newtheorem{theorem}{Theorem}[section]
\newtheorem{corollary}[theorem]{Corollary}
\newtheorem{lemma}[theorem]{Lemma}
\theoremstyle{definition}
\newtheorem{definition}{Definition}[section]
\newtheorem*{remark}{Remark}
\newtheorem*{counter example}{Counter Example}
\begin{document}
\title{Rotation Symmetric Bent Boolean Functions for $n=2p$}
\author{T. W. Cusick\\
	 University at Buffalo, 244 Math. Bldg. Buffalo, NY 14260\\
	 Corresponding author- e-mail: cusick@buffalo.edu\\
\and E. M. Sanger\\ 
University at Buffalo, 244 Math. Bldg., Buffalo, NY 14260\\
 e-mail: ellynsan@buffalo.edu} 
\date{}
\maketitle
	
\begin{abstract}
It has been conjectured that there are no homogeneous rotation symmetric bent Boolean functions of degree greater than two.  In this paper we begin by proving that sums of short-cycle rotation symmetric bent Boolean functions must contain a specific degree two monomial rotation symmetric Boolean function.  We then prove most cases of the conjecture in $n=2p$, $p>2$ prime, variables and extend this work to the nonhomogeneous case. 
\end{abstract}

\section{Introduction}
\label{intro}
A Boolean function in $n$ variables can be defined as a map from $\mathbb{V}_n$, the $n$-dimensional vector space over the two element field $\mathbb{F}_2,$ to $\mathbb{F}_2$.
  If $f$ is a Boolean function in $n$ variables, the {\it truth table} of $f$ is defined to be the $2^n$-tuple given by $(f({\bf v}_0),f({\bf v}_1),\ldots ,f({\bf v}_{2^n-1}))$ where ${\bf  v}_0=(0,\ldots ,0,0),{\bf  v}_1=(0,\ldots ,0,1),\ldots ,$ ${\bf  v}_{2^n-1}=(1,\ldots ,1,1)$ are the $2^n$ elements of $\mathbb{V}_n$ listed in lexicographical order. The {\it weight} or {\it Hamming weight} of $f$ (notation  $wt(f)$) is 
the number of $1$'s that appear in the truth table of $f$.\\
\indent As described in \cite[pp. 5-6]{CBF}, every Boolean function on $\mathbb{V}_n$ can be expressed as a polynomial over $\mathbb{F}_2$ in $n$ binary variables by:
$$ f(x_0,\ldots ,x_{n-1})=\displaystyle{\sum_{{\bf a}\in \mathbb{V}_n}}c_{\bf a}x_0^{a_0}\cdots x_{n-1}^{a_{n-1}} $$
where $c_{\bf a}\in \mathbb{F}_2$ and ${\bf a}=(a_0,\ldots ,a_{n-1})$ with each $a_i$ equal to $0$ or $1.$ The above representation is  the {\it algebraic normal form} (ANF) of $f$. Let $d_i$ be the number of variables in the $i$-th monomial of $f$, so $d_i$ is the {\it algebraic degree} (or just the {\it degree}) of the monomial. If we let $D$ be the set of the distinct degrees of the monomials in $f$ which have non-zero coefficients, then the {\it degree} (notation $\deg(f)$) of $f$ is given by max$(D)$. If $D$ contains only one element, then each monomial in $f$ has the same degree and $f$ is {\it homogeneous}. If $\deg(f)=1$, then $f$ is {\it affine}, and if $f$ is affine and homogeneous (i.e. the constant term is 0), then $f$ is {\it linear}.   \\
\indent  A Boolean function $f$ is {\it rotation symmetric} (RotS) if its ANF is invariant under any power of the cyclic permutation $\rho(x_0,\ldots ,x_{n-1})=(x_1,\ldots,x_{n-1},x_0)$.  We will use the notation $u\sim v$ to indicate that there exists some $1\leq k\leq n$ such that $\rho^k(u)=v$.   Clearly $\sim$ defines an equivalence relation on  
$\mathbb{V}_n.$

Let $O_n({\bf x})$ denote the orbit of {\bf x} under the action of $\rho^k$, $1\leq k\leq n$, and let $\mathcal{G}_n$  be the set of representatives of all the orbits in 
$\mathbb{V}_n$.   Then a rotation symmetric Boolean function $f$ can be written as 
$$a_0\oplus a_1x_0\oplus\sum\limits_{j=1}^{n-1}a_{1j}x_0x_j\oplus\ldots\oplus a_{12\ldots n}x_0x_1\cdots x_{n-1},$$ 
where the coefficients $a_0,a_1,a_{1j},\ldots,a_{12\ldots n}\in\mathbb{F}_2$, and the existence of a representative term $x_0x_{i_2}\cdots x_{i_l}$ implies the existence of all the terms from $O_n(x_0,x_{i_2},\ldots,x_{i_l})$ in the algebraic normal form.  We call this representation the {\it short algebraic normal form} (SANF) of $f$.  

Unless otherwise specified, all subscripts in any monomial will be taken mod~($n$) with entries in $\{0, 1, \ldots, n-1\}.$ We may omit the modulus if it is clear from the context.

Suppose $f$ has SANF $x_0x_{i_1}\ldots x_{i_l}$, then we say $f$ is  a {\it monomial rotation symmetric} (MRS) function and call it  {\it full-cycle} if $|O_n(x_0,x_{i_1},\ldots,x_{i_l})|=n$.  Thus if $f$ is full-cycle then it contains $n$ monomials in its ANF.  If $|O_n(x_0,x_{i_1},\ldots,x_{i_l})|<n$ we say $f$ is {\it short-cycle}.  In this case $f$ will contain $n/w$ monomials in its ANF for some divisor $w > 1$ of $n.$

The {\it Hamming distance} between two Boolean functions $f$ and $g$, denoted $d(f,g)$, is defined as $d(f,g)=wt(f\oplus g)$.  Each Boolean function $f$ has an associated {\it sign function}, $\hat{f}:\mathbb{R}^*\rightarrow\mathbb{C}^*$, defined by $\hat{f}({\bf x})=(-1)^{f({\bf x})}$. 

\begin{definition} 
The {\it Walsh transform} of a function $f:\mathbb{V}_n\rightarrow\mathbb{F}_2$ is the map $W(f):\mathbb{V}_n\rightarrow\mathbb{R}$ given by, 
\begin{equation*}
W(f)({\bf w})=\sum\limits_{{\bf x}\in\mathbb{V}_n}f({\bf x})(-1)^{{\bf w}\cdot{\bf x}}
\end{equation*}
\end{definition} 

The {\it Walsh spectrum} of $f$ is the list of the $2^n$ {\it Walsh coefficients} given by $W(f)({\bf w})$ as {\bf w} varies.

\begin{definition}
A matrix $H$ is called a {\it Hadamard matrix} of order $n$ if it is an $n\times n$ matrix of $\pm1$s such that 
\begin{equation*}
HH^t=nI_n,
\end{equation*}
where $H^t$ is the transpose of $H$ and $I_n$ is the $n\times n$ identity matrix. 
\end{definition}

\begin{definition}
A Boolean function $f$ in $n$ variables is called {\it bent} if and only if the Walsh transform coefficients of $\hat{f}$ are all $\pm 2^{n/2}$. 
\end{definition}

\begin{definition}
The {\it nonlinearity} of a function $f,$ denoted by $\mathcal{N}_f,$ is defined by
\[
\mathcal{N}_f=\min_{g \in {\cal A}_n} d(f, g),
\]
where ${\mathcal{A}}_n$ is the set of all affine functions on $\mathbb{V}_n$.
\end{definition}  

From \cite[pp. 76-77]{CBF} we have the following equivalent definition of the bent property:

\begin{theorem}
\label{Had}
Let $f:\mathbb{V}_n\rightarrow\mathbb{F}_2$ be a Boolean function.  Then the following are equivalent:
\begin{itemize}
\item[(i)] $f$ is bent.
\item[(ii)] $M_{\hat{f}}=(\hat{f}({\bf u}\oplus{\bf v}))$ is a Hadamard matrix.
\item[(iii)] The nonlinearity of $f$ is $\mathcal{N}_f=2^{n-1}-2^{n/2-1}.$
\end{itemize}
\end{theorem}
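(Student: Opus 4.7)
My plan is to prove (ii)$\Leftrightarrow$(i)$\Leftrightarrow$(iii) directly. Throughout, abbreviate $W_{\hat f}(\mathbf{w}):=\sum_{\mathbf{x}\in\mathbb{V}_n}\hat f(\mathbf{x})(-1)^{\mathbf{w}\cdot\mathbf{x}}$; in this notation (i) asserts that $|W_{\hat f}(\mathbf{w})|=2^{n/2}$ for every $\mathbf{w}\in\mathbb{V}_n$. The two equivalences rest on, respectively, a direct Hamming-distance computation combined with Parseval, and a Wiener--Khintchine type identity combined with Walsh inversion.

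For (i)$\Leftrightarrow$(iii), I would start from the elementary identity
\begin{equation*}
d(f,\,\mathbf{w}\cdot\mathbf{x}\oplus c)=2^{n-1}-\tfrac12(-1)^{c}W_{\hat f}(\mathbf{w}),
\end{equation*}
obtained by writing each disagreement indicator as $\tfrac12\bigl(1-(-1)^{f(\mathbf{x})\oplus\mathbf{w}\cdot\mathbf{x}\oplus c}\bigr)$ and summing. Minimising over the affine functions $\mathbf{w}\cdot\mathbf{x}\oplus c$ yields $\mathcal{N}_f=2^{n-1}-\tfrac12\max_{\mathbf{w}}|W_{\hat f}(\mathbf{w})|$. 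Since each $\hat f(\mathbf{x})^2=1$, Parseval's identity gives $\sum_{\mathbf{w}}W_{\hat f}(\mathbf{w})^2=2^{2n}$, so the average of the $2^n$ nonnegative numbers $W_{\hat f}(\mathbf{w})^2$ is exactly $2^n$. Hence $\max_{\mathbf{w}}|W_{\hat f}(\mathbf{w})|\geq 2^{n/2}$, with equality iff every Walsh coefficient has absolute value $2^{n/2}$. This gives $\mathcal{N}_f=2^{n-1}-2^{n/2-1}$ iff $f$ is bent.

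For (i)$\Leftrightarrow$(ii), note that $M_{\hat f}$ is already a $\pm 1$-matrix of order $2^n$, so the Hadamard condition reduces to $M_{\hat f}M_{\hat f}^{t}=2^nI_{2^n}$. Substituting $\mathbf{y}=\mathbf{u}\oplus\mathbf{v}$,
\begin{equation*}
(M_{\hat f}M_{\hat f}^{t})_{\mathbf{u},\mathbf{u}'}=\sum_{\mathbf{v}}\hat f(\mathbf{u}\oplus\mathbf{v})\hat f(\mathbf{u}'\oplus\mathbf{v})=r_{\hat f}(\mathbf{u}\oplus\mathbf{u}'),\quad r_{\hat f}(\mathbf{a}):=\sum_{\mathbf{y}}\hat f(\mathbf{y})\hat f(\mathbf{y}\oplus\mathbf{a}),
\end{equation*}
so (ii) is equivalent to $r_{\hat f}(\mathbf{a})=2^n\delta_{\mathbf{a},\mathbf{0}}$. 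A further change of variables $\mathbf{z}=\mathbf{y}\oplus\mathbf{a}$ gives $\sum_{\mathbf{a}}r_{\hat f}(\mathbf{a})(-1)^{\mathbf{w}\cdot\mathbf{a}}=W_{\hat f}(\mathbf{w})^2$, and therefore by Walsh inversion
\begin{equation*}
r_{\hat f}(\mathbf{a})=2^{-n}\sum_{\mathbf{w}}W_{\hat f}(\mathbf{w})^2(-1)^{\mathbf{w}\cdot\mathbf{a}}.
\end{equation*}
Thus $r_{\hat f}=2^n\delta_{\mathbf{0}}$ iff $W_{\hat f}(\mathbf{w})^2$ is the constant $2^n$, which is exactly (i).

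There is no genuine obstacle here; the proof is just a chain of formal manipulations. The only bit of care needed is to keep the $\pm 1$-valued sign function $\hat f$ distinct from the $\mathbb{F}_2$-valued $f$, since the Walsh transform in the Definition above is written for $\mathbb{F}_2$-valued inputs and is being applied in the statement of (i) to $\hat f$ extended to a real-valued function.
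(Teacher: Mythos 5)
Your proof is correct: the distance identity, the Parseval argument forcing $\max_{\mathbf{w}}|W_{\hat f}(\mathbf{w})|\geq 2^{n/2}$ with equality exactly in the bent case, and the autocorrelation computation reducing the Hadamard condition to $W_{\hat f}(\mathbf{w})^2\equiv 2^n$ are all carried out accurately. Note that the paper itself gives no proof of this theorem; it simply cites \cite[pp.~76--77]{CBF}, and your argument is essentially the standard textbook proof that the citation points to, so there is no substantive difference of approach to report.
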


It is easy to observe that bent functions exist only for even dimensions.  Let $n$ be even and ${\bf v}=(a_0,a_1,\ldots,a_{n-1})$; then we will use the notation 
${\bf v}'=(a_0,\ldots,a_{\frac{n}{2}-1})$ and ${\bf v}''=(a_{\frac{n}{2}},\ldots,a_{n-1})$. 

\begin{lemma}
\label{degn/2}
For $n=2$, the degree of a bent function on $\mathbb{V}_2$ is $2$.  For $n>2$, the degree of a bent function is at most $n/2$.
\end{lemma}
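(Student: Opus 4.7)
My plan is to handle the two cases of the lemma separately, using different tools.

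For the case $n=2$: I would appeal directly to the nonlinearity characterization in Theorem~\ref{Had}(iii). A bent function on $\mathbb{V}_2$ must have nonlinearity $2^{n-1}-2^{n/2-1}=1$, i.e. it is at Hamming distance exactly $1$ from some affine function. But on $\mathbb{V}_2$ any two distinct affine functions differ on exactly $2^{n-1}=2$ points, so an affine function cannot itself be bent. Therefore a bent $f\in\mathbb{V}_2$ satisfies $\deg(f)\geq 2$, and combined with the obvious bound $\deg(f)\leq n=2$ we conclude $\deg(f)=2$.

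For the case $n>2$: I would use the Möbius-inversion formula that recovers the ANF coefficients from values of $f$. For a monomial indexed by $\mathbf{a}\in\mathbb{V}_n$ of weight $|\mathbf{a}|$, the coefficient is $c_{\mathbf{a}}=\sum_{\mathbf{x}\preceq\mathbf{a}}f(\mathbf{x})\pmod{2}$. Writing $(-1)^{f(\mathbf{x})}=1-2f(\mathbf{x})$ and setting $S(\mathbf{a})=\sum_{\mathbf{x}\preceq\mathbf{a}}(-1)^{f(\mathbf{x})}$, I obtain $2c_{\mathbf{a}}\equiv 2^{|\mathbf{a}|}-S(\mathbf{a})\pmod 4$. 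So for $|\mathbf{a}|\geq 2$, proving $c_{\mathbf{a}}=0$ reduces to showing $S(\mathbf{a})\equiv 0\pmod 4$.

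The key computation is to substitute the inverse Walsh transform $(-1)^{f(\mathbf{x})}=2^{-n}\sum_{\mathbf{w}}W(\hat{f})(\mathbf{w})(-1)^{\mathbf{w}\cdot\mathbf{x}}$ into $S(\mathbf{a})$ and simplify the inner sum $\sum_{\mathbf{x}\preceq\mathbf{a}}(-1)^{\mathbf{w}\cdot\mathbf{x}}$, which vanishes unless $\mathbf{w}$ is supported off $\mathbf{a}$. This collapses the expression to
\[
S(\mathbf{a})=2^{|\mathbf{a}|-n}\sum_{\mathbf{w}\preceq\bar{\mathbf{a}}}W(\hat{f})(\mathbf{w}).
\]
Since $f$ is bent, every Walsh coefficient equals $\pm 2^{n/2}$, so $S(\mathbf{a})=k\cdot 2^{|\mathbf{a}|-n/2}$ where $k$ is a signed sum of $2^{n-|\mathbf{a}|}$ terms of $\pm 1$.

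The main obstacle, and the only place where the hypothesis $n>2$ is used, is the threshold case $|\mathbf{a}|=n/2+1$: here $S(\mathbf{a})=2k$ and I need $k$ to be even. But $k$ is a $\pm 1$-sum of $2^{n/2-1}$ summands, which is an even number of summands precisely when $n/2-1\geq 1$, i.e.\ $n\geq 4$; the parity of a $\pm 1$-sum matches the parity of the number of summands, so $k$ is even. For $|\mathbf{a}|\geq n/2+2$ the factor $2^{|\mathbf{a}|-n/2}$ already forces $S(\mathbf{a})\equiv 0\pmod 4$ with no parity argument needed. Combining these, $c_{\mathbf{a}}=0$ for every $\mathbf{a}$ with $|\mathbf{a}|>n/2$, giving $\deg(f)\leq n/2$. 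The failure of the parity step at $n=2$ is consistent with the first part of the lemma, where the degree is forced to be exactly $n/2=1+1=2$.
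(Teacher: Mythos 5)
Your proof is correct, and it is genuinely more self-contained than what the paper offers: the paper's ``proof'' of this lemma is a one-line citation to \cite[p.~80]{CBF}, whereas you have written out the classical Rothaus-style argument in full --- recovering the ANF coefficient $c_{\mathbf{a}}$ by M\"obius inversion, expressing the subcube sum $S(\mathbf{a})$ through the inverse Walsh transform so that $S(\mathbf{a})=2^{|\mathbf{a}|-n}\sum_{\mathbf{w}\preceq\bar{\mathbf{a}}}W(\hat{f})(\mathbf{w})$, and then using the bentness condition $W(\hat{f})(\mathbf{w})=\pm2^{n/2}$ together with the parity of a $\pm1$-sum to force $S(\mathbf{a})\equiv0\pmod 4$ whenever $|\mathbf{a}|>n/2$. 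This is essentially the argument the citation points to, so the mathematical route is the standard one, but your version has the advantage of making explicit exactly where $n>2$ enters (the threshold case $|\mathbf{a}|=n/2+1$, where the number of Walsh coefficients being summed is $2^{n/2-1}$ and must be even), which also explains cleanly why the bound fails at $n=2$. One small inaccuracy worth fixing in the $n=2$ case: two distinct affine functions on $\mathbb{V}_2$ differ on $2$ \emph{or} $4$ points (a function and its complement differ everywhere), so the sentence as written is not quite right; but the argument does not need it --- if $f$ were affine (including constant), its distance to itself is $0$, so $\mathcal{N}_f=0\neq 1=2^{n-1}-2^{n/2-1}$, and Theorem~\ref{Had}(iii) already rules out bentness, giving $\deg(f)=2$ as you conclude.
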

\begin{proof}
The proof of this can be found in \cite[pp. 80]{CBF}.
\end{proof}

\section{Rotation symmetric bent functions when n=2m}
\label{HomRSany} 

In this section we will consider rotation symmetric functions of any even degree $n$.  We will first prove that any degree $2$ rotation symmetric bent function must contain $f_0=\bigoplus\limits_{i=0}^{n/2-1}x_ix_{\frac{n}{2}+i}$ and then extend this to the case where $f$ is composed of short-cycle MRS functions.  We begin by proving the well known fact that $f_0$ is bent.

\begin{lemma}
\label{f0bent}
Let $n=2m$ and $f_0=\bigoplus\limits_{i=0}^{m-1}x_ix_{m+i}$.  Then $wt(f_0)=2^{2m-1}-2^{m-1}$ and $f_0$ is bent.
\end{lemma}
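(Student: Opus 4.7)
The plan is to exploit the key structural feature of $f_0$: it is a sum of $m$ two-variable products $x_i x_{m+i}$ on pairwise disjoint pairs of variables. Because no variable appears in more than one pair, both the weight and the Walsh transform factor across the $m$ independent blocks.

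For the bent property, I would compute the Walsh transform of the sign function directly. With $n = 2m$, the exponent splits coordinate-by-coordinate, so
\[
W(\hat{f_0})({\bf w}) = \sum_{{\bf x} \in \mathbb{V}_n} (-1)^{f_0({\bf x}) + {\bf w}\cdot{\bf x}} = \prod_{i=0}^{m-1} \sum_{a,b \in \mathbb{F}_2} (-1)^{ab + w_i a + w_{m+i} b}.
\]
Each inner sum ranges over only four values and is evaluated by the four cases for $(w_i, w_{m+i}) \in \mathbb{F}_2^2$. A short enumeration shows the inner sum equals $-2$ when $w_i = w_{m+i} = 1$ and $+2$ in the other three cases. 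Hence every factor is $\pm 2$, so $W(\hat{f_0})({\bf w}) = \pm 2^m = \pm 2^{n/2}$, which is exactly the bent condition.

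For the weight, I would specialize the same identity at ${\bf w} = {\bf 0}$: every factor becomes $+2$, so $W(\hat{f_0})({\bf 0}) = 2^m$. Since $W(\hat{f_0})({\bf 0})$ is the number of zeros of $f_0$ minus the number of ones, this yields $2^n - 2\, wt(f_0) = 2^m$, and therefore $wt(f_0) = 2^{n-1} - 2^{m-1} = 2^{2m-1} - 2^{m-1}$.

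There is no real obstacle here; the entire argument is driven by the independence of the $m$ disjoint coordinate pairs. A purely combinatorial alternative for the weight would be to use the generating function $(3+z)^m$ (with $z$ tracking the parity of the pair-products) and extract $\bigl((3+1)^m - (3-1)^m\bigr)/2 = 2^{2m-1} - 2^{m-1}$, but the single Walsh computation above delivers both claims at once and gives the cleanest proof.
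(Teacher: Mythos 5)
Your proof is correct, and it takes a genuinely different route from the paper. The paper disposes of the lemma by citation: bentness follows from \cite[Cor.\ 5.23, p.\ 82]{CBF} after observing that a permutation of the variables turns $f_0$ into the standard quadratic bent function $x_0x_1\oplus x_2x_3\oplus\cdots$, and the weight is then stated to follow from ``a computation.'' You instead give a self-contained argument: since the pairs $\{x_i,x_{m+i}\}$ are disjoint, the exponential sum $\sum_{\mathbf{x}}(-1)^{f_0(\mathbf{x})+\mathbf{w}\cdot\mathbf{x}}$ factors into $m$ four-term sums, each of which a trivial enumeration shows to be $\pm 2$ (namely $-2$ exactly when $w_i=w_{m+i}=1$), so every Walsh coefficient of $\hat{f_0}$ is $\pm 2^{m}=\pm 2^{n/2}$; and specializing at $\mathbf{w}=\mathbf{0}$ gives $2^n-2\,wt(f_0)=2^m$, hence $wt(f_0)=2^{2m-1}-2^{m-1}$. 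Your version buys independence from the reference and delivers the weight and the bent property from a single identity (it also implicitly reproves the relevant special case of the cited corollary), at the cost of a few lines of computation; the paper's version is shorter and leans on machinery already available in \cite{CBF}, which is reasonable since the lemma is described there as a well known fact. Your aside about the generating function $(3+z)^m$ is also a valid alternative count for the weight, since each disjoint pair contributes $3$ assignments making its product $0$ and one making it $1$.
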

\begin{proof}
The fact that $f$ is bent follows from \cite[Cor. 5.23, p. 82]{CBF}, after a permutation of the variables.  Then a computation gives the weight.
\end{proof} 

To prove that any degree $2$ rotation symmetric bent function, $f$, must contain $f_0$, we will look at the matrix $M=(\hat{f}({\bf v}_i\oplus{\bf v}_j))(\hat{f}({\bf v}_i\oplus{\bf v}_j))^T$  to determine when $(\hat{f}({\bf v}_i\oplus{\bf v}_j))$ is a Hadamard matrix. We will first simplify the matrix $M$  to a useful form.

\begin{lemma}
Let $M=(\hat{f}({\bf v}_i\oplus{\bf v}_j))(\hat{f}({\bf v}_i\oplus{\bf v}_j))^T$ where $f$ is a Boolean function in $n$ variables, then 
\begin{equation}
\label{[1]}
M_{i,j}=\sum_{k=0}^{2^n-1}\hat{f}({\bf v}_i\oplus{\bf v}_{k})\hat{f}({\bf v}_{k}\oplus{\bf v}_j). 
\end{equation}
\end{lemma}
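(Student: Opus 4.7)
The plan is essentially to unpack definitions, as the lemma is really just matrix multiplication written out entry-wise. Let $A=(A_{i,j})$ denote the matrix with $A_{i,j}=\hat{f}({\bf v}_i\oplus{\bf v}_j)$, so that $M=AA^T$ by hypothesis.

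First I would recall that for any two square matrices $B$ and $C$ of size $2^n\times 2^n$, the $(i,j)$-entry of the product is $(BC)_{i,j}=\sum_{k=0}^{2^n-1} B_{i,k}C_{k,j}$. Applying this with $B=A$ and $C=A^T$ gives
\[
M_{i,j}=(AA^T)_{i,j}=\sum_{k=0}^{2^n-1}A_{i,k}(A^T)_{k,j}=\sum_{k=0}^{2^n-1}A_{i,k}A_{j,k}.
\]

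Next I would substitute the definition of the entries of $A$, which yields $A_{i,k}=\hat{f}({\bf v}_i\oplus{\bf v}_k)$ and $A_{j,k}=\hat{f}({\bf v}_j\oplus{\bf v}_k)$. Since $\oplus$ is commutative on $\mathbb{V}_n$, we may rewrite $\hat{f}({\bf v}_j\oplus{\bf v}_k)=\hat{f}({\bf v}_k\oplus{\bf v}_j)$, giving the claimed formula
\[
M_{i,j}=\sum_{k=0}^{2^n-1}\hat{f}({\bf v}_i\oplus{\bf v}_k)\hat{f}({\bf v}_k\oplus{\bf v}_j).
\]

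There is no real obstacle here; the only content is the observation that because $A$ is a symmetric matrix (as $\oplus$ is commutative), transposing $A$ simply swaps the roles of the two arguments in the XOR, so the product $AA^T$ rearranges into the convolution-type sum stated in the lemma. The rewriting in terms of ${\bf v}_k\oplus{\bf v}_j$ rather than ${\bf v}_j\oplus{\bf v}_k$ is cosmetic but is presumably the form that will be most convenient for the subsequent analysis of when $M=2^nI$, i.e., when $A$ is Hadamard.
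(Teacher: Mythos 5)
Your proof is correct and follows essentially the same route as the paper: both arguments come down to the symmetry of the matrix $A=(\hat{f}({\bf v}_i\oplus{\bf v}_j))$, a consequence of the commutativity of $\oplus$, combined with writing out the matrix product entrywise. The paper simply uses $A^T=A$ before expanding $M=AA$, whereas you expand $AA^T$ first and invoke commutativity afterward; the two are interchangeable.
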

\begin{proof}
Let $A=(\hat{f}({\bf v}_i\oplus{\bf v}_j))$.  Notice $A^{T}=A$, so $M=AA$.  Then 
\begin{equation*}
M_{i,j}=(AA)_{i,j}=\sum\limits_{k=0}^{2^n-1}A_{ik}A_{kj}=\sum\limits_{k=0}^{2^n-1}\hat{f}({\bf v}_i\oplus{\bf v}_k)\hat{f}({\bf v}_k\oplus{\bf v}_j).
\end{equation*} 
\end{proof}

\begin{lemma}
\label{sim}  
We have ${\bf v}_i\sim{\bf v}_j$ if and only if there exists $k, 1\leq k\leq n$ such that $2^ki \equiv j \bmod(2^n-1).$
\end{lemma}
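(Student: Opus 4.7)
The plan is to translate the action of $\rho$ on $\mathbb{V}_n$ directly into an arithmetic operation on the integer indices of the basis vectors.

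First, I would write out the indexing explicitly. Since the ${\bf v}_i$ are listed in lexicographical order, ${\bf v}_i=(a_0,a_1,\ldots,a_{n-1})$ with $i=\sum_{j=0}^{n-1}a_j 2^{n-1-j}$; that is, the entries of ${\bf v}_i$ are simply the binary digits of $i$, written most-significant-bit first in an $n$-bit field. The first step is then a one-line computation: applying $\rho$ gives $\rho({\bf v}_i)=(a_1,\ldots,a_{n-1},a_0)$, and the integer represented by this tuple equals
\[
a_1 2^{n-1}+\cdots+a_{n-1}2+a_0 \;=\; 2i - a_0(2^n-1).
\]
Since $2^n\equiv 1\pmod{2^n-1}$, this is congruent to $2i$ modulo $2^n-1$. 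Thus $\rho({\bf v}_i)={\bf v}_{i'}$ with $i'\equiv 2i\pmod{2^n-1}$.

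Iterating $k$ times (a straightforward induction on $k$), I would conclude that $\rho^k({\bf v}_i)={\bf v}_{i^{(k)}}$ with $i^{(k)}\equiv 2^k i\pmod{2^n-1}$. The lemma then follows immediately from the definition of $\sim$: ${\bf v}_i\sim {\bf v}_j$ means $\rho^k({\bf v}_i)={\bf v}_j$ for some $1\leq k\leq n$, which happens precisely when $j\equiv 2^ki\pmod{2^n-1}$.

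The only subtlety I anticipate — and probably the one place the write-up needs a sentence of care — is the all-ones vector ${\bf v}_{2^n-1}$. Numerically $2^n-1\equiv 0\pmod{2^n-1}$, which collides with the index of ${\bf v}_0$, but both ${\bf v}_0=(0,\ldots,0)$ and ${\bf v}_{2^n-1}=(1,\ldots,1)$ are fixed by $\rho$, so each forms its own orbit and no ambiguity actually arises in the ``if and only if.'' I would note this explicitly and then the proof is complete. Beyond this edge case, there is no real obstacle; the statement is essentially a reformulation of the familiar fact that cyclic rotation of bits corresponds to multiplication by $2$ in $\mathbb{Z}/(2^n-1)\mathbb{Z}$.
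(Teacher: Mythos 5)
Your proposal is correct and takes essentially the same route as the paper: both arguments reduce to the fact that one cyclic shift of the bit string multiplies the index by $2$ modulo $2^n-1$, the only difference being organizational (the paper expands $2^l i$ directly for a general shift $l$ in each direction, while you prove the single-shift case and induct). Your explicit flagging of the ${\bf v}_0$ versus ${\bf v}_{2^n-1}$ index collision is a small point of care that the paper's proof does not mention.
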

\begin{proof}
Let ${\bf v}_i=(a_0,a_1,...,a_{n-2},a_{n-1}),$ so $i=\displaystyle\sum\limits_{u=1}^{n}a_{n-u}2^{u-1}.$
\noindent
$\Rightarrow$: Assume ${\bf v}_i\sim{\bf v}_j$, then there exists $\rho^l,\ 1\leq l\leq n,$ such that 
\begin{equation*}
\rho^l({\bf v}_i)=(\rho^l(a_{0}),\ldots,\rho^l(a_{n-2}),\rho^l(a_{n-1}))=(a_{l},\ldots,a_{l+n-2},a_{l+n-1})={\bf v}_j.
\end{equation*}
Thus, $j=\displaystyle\sum\limits_{u=1}^{n}a_{l+n-u}2^{u-1}$
where $1\leq l\leq n$ and 
\begin{eqnarray*}
2^{l}i &=&\displaystyle\sum\limits_{u=1}^{n}a_{n-u}2^{l+u-1}\\ 
&=&a_{n-1}2^{l}+\ldots+a_{n-(n-l)}2^{l+(n-l)-1}+a_{n-(n-l+1)}2^{l+(n-l+1)-1}+\\
&&a_{n-(n-l+2)}2^{l+(n-l+2)-1}+\ldots+a_{0}2^{l+n-1}\\
              &=& a_{n-1}2^{l}+\ldots+a_{l}2^{n-1}+a_{l-1}2^{n}+a_{l-2}2^{n+1}\ldots+a_{0}2^{n-1+l}\\
              &\equiv& \mbox{$a_{l-1}+a_{l-2}2+\ldots+a_{l}2^{n-1}$ mod~($2^n-1$)}\\
              &\equiv& \mbox{$a_{l+n-1}+a_{l+n-2}2+\ldots+a_{l+1}2^{n-2}+a_{l}2^{n-1}$ mod~($2^n-1$)}\\
              &\equiv& \mbox{$j$ mod~($2^n-1$)},
\end{eqnarray*}
since indices are taken mod $n$.

Thus there exists $k, 1\leq k\leq n$ such that $2^ki\equiv$\mbox{$j$ mod~($2^n-1$)}.\\\\
\noindent
$\Leftarrow$: Assume there exists $k, 1\leq k\leq n$ such that $2^ki\equiv j \mod{(2^n-1)}$.  Then
\begin{eqnarray*}
2^ki &=&\displaystyle\sum\limits_{u=1}^na_{n-u}2^{k+u-1}\\ 
	&=&a_{n-1}2^k+\ldots+a_{n-(n-k)}2^{k+(n-k)-1}+a_{n-(n-k+1)}2^{k+(n-k+1)-1}\\
	&&+a_{n-(n-k+2)}2^{k+(n-k+2)-1}+\ldots+a_{0}2^{n-1+k}\\
	&=&a_{n-1}2^k+\ldots+a_k2^{n-1}+a_{k-1}2^{n}+a_{k-2}2^{n+1}+\ldots+a_02^{n-1+k}\\
        &\equiv& a_{k-1}+a_{k-2}2+\ldots+a_{k}2^{n-1} \mod{(2^n-1)}\\
        &=&j.
\end{eqnarray*}
Thus ${\bf v}_j=(a_{k-1},a_{k-2},\ldots,a_k)$ so $\rho^{k-1}({\bf v}_i)={\bf v}_j$ and ${\bf v}_i\sim{\bf v}_j.$
\end{proof}

\begin{lemma} 
\label{conjeq}
We have ${\bf v}_a\sim{\bf v}_b$ if and only if  ${\bf v}_{2^n-1-a}\sim{\bf v}_{2^n-1-b}$.
\end{lemma}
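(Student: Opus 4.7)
The plan is to reduce the statement to the arithmetic characterization of $\sim$ that was just established in Lemma \ref{sim}, so the whole thing becomes a short calculation modulo $2^n-1$. By that lemma, ${\bf v}_a \sim {\bf v}_b$ is equivalent to the existence of some $k$ with $1 \le k \le n$ such that $2^k a \equiv b \pmod{2^n-1}$, and similarly ${\bf v}_{2^n-1-a} \sim {\bf v}_{2^n-1-b}$ is equivalent to the existence of some $k'$ with $2^{k'}(2^n-1-a) \equiv 2^n-1-b \pmod{2^n-1}$. So the task is to show that these two existential statements are equivalent.

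The key identity is simply that $2^{k}(2^n-1) \equiv 0 \pmod{2^n-1}$ for every $k$, so
\[
2^{k}(2^n-1-a) \equiv -2^{k}a \pmod{2^n-1}.
\]
Using this, the congruence $2^{k}(2^n-1-a) \equiv 2^n-1-b \pmod{2^n-1}$ reduces, after multiplying by $-1$, to $2^{k}a \equiv b \pmod{2^n-1}$. I would do this computation in both directions with the same value of $k$, so the exponent $k$ witnessing one equivalence is exactly the exponent witnessing the other; this immediately gives the biconditional once Lemma \ref{sim} is invoked on each side.

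A more conceptual way to see the same thing, which I would mention in a sentence, is to notice that ${\bf v}_{2^n-1-a}$ is just the bitwise complement of ${\bf v}_a$ (since $a$ and $2^n-1-a$ have complementary binary expansions in $n$ bits), and that the cyclic shift $\rho$ commutes with bitwise complementation. Hence $\rho^{k-1}({\bf v}_a) = {\bf v}_b$ if and only if $\rho^{k-1}({\bf v}_{2^n-1-a}) = {\bf v}_{2^n-1-b}$. Either presentation works; the arithmetic one is preferable here because Lemma \ref{sim} has just been proved and the paper is using the integer-indexed form ${\bf v}_i$ throughout.

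There is no real obstacle in this argument; the only minor thing to watch is that $k$ is allowed to range over $1 \le k \le n$ in Lemma \ref{sim}, and the transformation $k \mapsto k$ preserves this range, so nothing needs to be adjusted when passing from one congruence to the other.
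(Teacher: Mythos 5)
Your proposal is correct and matches the paper's proof essentially exactly: both reduce the claim to Lemma \ref{sim} and carry out the same modular computation, noting that $2^k(2^n-1-a)\equiv 2^n-1-2^ka \pmod{2^n-1}$ with the same witness $k$ in both directions. The bitwise-complement remark is a nice conceptual aside but adds nothing beyond the paper's argument.
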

\begin{proof}  By Lemma \ref{sim},
\begin{eqnarray*}
{\bf v}_a\sim{\bf v}_b &\iff& \text{there exists}~ k, 1\leq k\leq n\text{ such that }2^ka\equiv \mbox{$b$ mod~($2^n-1$)}\\
                                     &\iff& 2^n-1-2^ka\equiv 2^n-1-b\\
                                     &\iff& 2^k(2^n-1-a)\equiv 2^n-1-b\\
                                     &\iff& {\bf v}_{2^n-1-a}\sim{\bf v}_{2^n-1-b}.
\end{eqnarray*}
\end{proof}

\begin{lemma}
\begin{equation}
\label{[2]}
M_{0,2^n-1}=\sum_{k=0}^{2^{n-1}-1}2\hat{f}({\bf v}_k)\hat{f}({\bf v}_{2^n-1-k}).
\end{equation}
\end{lemma}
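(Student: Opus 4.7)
The plan is to start from equation (\ref{[1]}) with $i=0$ and $j=2^n-1$, and then simplify using the fact that $\mathbf{v}_0 = (0,\ldots,0)$ is the zero vector (so $\mathbf{v}_0 \oplus \mathbf{v}_k = \mathbf{v}_k$) and $\mathbf{v}_{2^n-1} = (1,\ldots,1)$ is the all-ones vector (so $\mathbf{v}_k \oplus \mathbf{v}_{2^n-1}$ is the bitwise complement of $\mathbf{v}_k$, which in the lexicographic indexing corresponds to $\mathbf{v}_{2^n-1-k}$). After these substitutions the displayed sum becomes
\[
M_{0,2^n-1} \;=\; \sum_{k=0}^{2^n-1} \hat{f}(\mathbf{v}_k)\hat{f}(\mathbf{v}_{2^n-1-k}).
\]

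Next I would split this sum of $2^n$ terms into the first half (with $k$ from $0$ to $2^{n-1}-1$) and the second half (with $k$ from $2^{n-1}$ to $2^n-1$). The key observation is that the map $k \mapsto 2^n-1-k$ is an involution which sends $\{0,1,\ldots,2^{n-1}-1\}$ bijectively onto $\{2^{n-1},\ldots,2^n-1\}$, and the summand $\hat{f}(\mathbf{v}_k)\hat{f}(\mathbf{v}_{2^n-1-k})$ is invariant under this involution (the two factors simply swap). So under the substitution $k' = 2^n-1-k$ the second half reproduces the first half, and thus
\[
M_{0,2^n-1} \;=\; 2\sum_{k=0}^{2^{n-1}-1} \hat{f}(\mathbf{v}_k)\hat{f}(\mathbf{v}_{2^n-1-k}),
\]
which is exactly (\ref{[2]}).

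There is no real obstacle here — the argument is a purely index-combinatorial manipulation of equation (\ref{[1]}). The only point that requires a moment of care is verifying that bitwise complementation of $\mathbf{v}_k$ yields precisely $\mathbf{v}_{2^n-1-k}$ in lexicographic order, which is immediate from the standard binary expansion $k = \sum a_{n-u}2^{u-1}$ used in Lemma \ref{sim}. Lemmas \ref{sim} and \ref{conjeq} are not actually needed for this particular lemma, though they presumably set up notation and intuition for later results where the pairing $k \leftrightarrow 2^n-1-k$ will be combined with the orbit equivalence $\sim$.
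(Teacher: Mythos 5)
Your proposal is correct and follows essentially the same route as the paper: specialize equation \eqref{[1]} to $i=0$, $j=2^n-1$, use that $\mathbf{v}_0\oplus\mathbf{v}_k=\mathbf{v}_k$ and $\mathbf{v}_k\oplus\mathbf{v}_{2^n-1}=\mathbf{v}_{2^n-1-k}$, then pair the term at index $k$ with the term at index $2^n-1-k$ to extract the factor of $2$. The paper simply writes out this pairing term by term rather than phrasing it as an involution, so there is no substantive difference.
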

\begin{proof}
From \eqref{[1]} we have,
\begin{eqnarray*}
M_{0,2^n-1} &=&\sum\limits_{k=0}^{2^n-1}\hat{f}({\bf v}_0\oplus{\bf v}_{k})\hat{f}({\bf v}_k\oplus{\bf v}_{2^n-1})=\sum
\limits_{k=0}^{2^n-1}\hat{f}({\bf v}_k)\hat{f}({\bf v}_{2^n-1-k})\\
&=&\hat{f}({\bf v}_0)\hat{f}({\bf v}_{2^n-1})+\hat{f}
({\bf v}_1)\hat{f}({\bf v}_{2^n-2})+\ldots+\hat{f}({\bf v}_{2^{n-1}-1})\hat{f}({\bf v}_{2^{n-1}})\\
& &+\hat{f}({\bf v}_{2^{n-1}})\hat{f}
({\bf v}_{2^{n-1}-1})+\ldots+\hat{f}({\bf v}_{2^n-2})\hat{f}({\bf v}_1)+\hat{f}({\bf v}_{2^n-1})\hat{f}({\bf v}_0)\\
&=&\sum\limits_{k=0}
^{2^{n-1}-1}2\hat{f}({\bf v}_k)\hat{f}({\bf v}_{2^n-1-k}).
\end{eqnarray*}
\end{proof} 


\begin{lemma}
\label{f0onlyshort}
The function $f_0=\bigoplus\limits_{i=0}^{m-1}x_ix_{m+i}$ is the only MRS short-cycle function of degree $2$ in $n=2m$ variables.
\end{lemma}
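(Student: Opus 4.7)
The plan is to observe that every degree~$2$ MRS function has SANF of the form $x_0 x_j$ for some $1 \le j \le n-1$, so the question reduces to determining for which $j$ the orbit
\[
O_n(x_0, x_j) = \{\rho^k(x_0 x_j) : 0 \le k \le n-1\} = \{x_k x_{k+j} : 0 \le k \le n-1\}
\]
(indices mod $n$) has size strictly less than $n$. This is a purely combinatorial condition on $j$, and $f_0$ being the function itself will fall out once I pin down the valid~$j$.

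First I would note that $|O_n(x_0,x_j)| = n/s$, where $s$ is the smallest positive integer for which $\rho^s(x_0 x_j) = x_0 x_j$ as a monomial, i.e., $\{s, s+j\} \equiv \{0, j\} \pmod{n}$ as sets. Since $s \ge 1$, the only way to obtain this set equality is to have $s \equiv j \pmod n$ and $s + j \equiv 0 \pmod n$, which together force $2j \equiv 0 \pmod n$. Because $1 \le j \le n-1$, this yields the unique solution $j = m = n/2$.

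Next I would substitute $j = m$ back into the SANF to recover
\[
f = \bigoplus_{k=0}^{m-1} x_k x_{m+k} = f_0,
\]
where the orbit size is $m = n/2$, confirming the short-cycle property. For every other value of $j$ the stabilizer computation above shows $\rho^k(x_0 x_j) \ne x_0 x_j$ for all $1 \le k \le n-1$, so the orbit is full and the corresponding MRS function is full-cycle, not short-cycle.

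I do not anticipate a serious obstacle here; the only subtlety is the set-equality step in the stabilizer analysis, where one must be careful that $\{s, s+j\} = \{0, j\}$ mod $n$ can match in two ways ($s \equiv 0$ is excluded by $s \ge 1$, leaving $s \equiv j$ and $s+j \equiv 0$). After that, the conclusion $j = m$ and the identification of $f_0$ are immediate.
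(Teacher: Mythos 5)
Your proof is correct and follows essentially the same route as the paper: the paper also reduces to the set equality $\{0,k\}=\{M,k+M\}\pmod{n}$ (with $M$ the number of monomials, your minimal period $s$), rules out the trivial matching because $1\leq M<n$, and concludes $2M\equiv 0\pmod{2m}$ forces $k=M=m$, i.e.\ $f=f_0$. The only stylistic difference is that you phrase the argument via the stabilizer/orbit size while the paper counts monomials directly, and your remark ``since $s\geq 1$'' should really invoke $s\leq n-1$ (which the short-cycle hypothesis provides) to exclude the matching $s\equiv 0$.
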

\begin{proof}
Let $f$ be an MRS function with SANF $x_0x_k$.  Suppose $f$ is short-cycle, then $f$ has $1\leq M<2m$ monomials and 
\begin{equation*}
f=x_0x_k\oplus x_1x_{k+1}\oplus\ldots\oplus x_{M-1}x_{k+M-1}. 
\end{equation*}
Thus $\{0,k\}=\{M,k+M\}$.  Since $1\leq M<2m=n$ then we have the following cases:
\begin{itemize}
\item[(i)] $0=M$ and $k\equiv\mbox{$M+k$ mod~($n$)}$. This is a contradiction since $1\leq M$.
\item[(ii)] $k=M$ and $0\equiv\mbox{$M+k$ mod~($n$)}$. Thus $0\equiv\mbox{$M+k$ mod~($n$)}\equiv\mbox{$2M$ mod~($2m$)}$.  So $2M=2am$ for some $a\in\mathbb{Z}$. Since $1\leq M<2m,$ then $2\leq 2am<4m$ and so $a=1$.  Thus $M=m$. 
\end{itemize}
Thus $f$ is a short-cycle if and only if $k=m$ which gives $f=f_0$.
\end{proof}

We now have enough to prove that $f_0$ is the only homogeneous MRS bent function of degree $2$:

\begin{theorem}
\label{f0only2}
Let $f$ be a rotation symmetric boolean function in $n=2m$ variables which has the SANF $x_0x_k$.  Then $f$ is bent if and only if $k=m$.
\end{theorem}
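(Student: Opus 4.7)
The plan is to verify both directions using the Hadamard-matrix characterization of bent functions from Theorem \ref{Had} together with the formula \eqref{[2]} for $M_{0,2^n-1}$. The forward direction is immediate: when $k=m$, the function $f$ coincides with $f_0$, which is bent by Lemma \ref{f0bent}.

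For the converse, suppose $k\neq m$. By Lemma \ref{f0onlyshort}, $f$ must then be full-cycle, so
$$f=\bigoplus_{i=0}^{n-1}x_ix_{k+i}$$
is the XOR of $n$ distinct degree-two monomials. The key observation is that, because the number of monomials $n=2m$ is even, $f$ is invariant under bitwise complementation. Indeed, expanding $(1\oplus x_i)(1\oplus x_{k+i})=1\oplus x_i\oplus x_{k+i}\oplus x_ix_{k+i}$ and summing over $i=0,\ldots,n-1$, the constant contribution is $n\bmod 2=0$ and the linear contribution is $\bigoplus_i x_i\oplus\bigoplus_i x_{k+i}=0$ (since reindexing shows the two XORs agree), leaving $f(\bar{\bf x})=f({\bf x})$ for every ${\bf x}\in\mathbb{V}_n$.

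Because ${\bf v}_{2^n-1-j}$ is precisely the bitwise complement $\bar{\bf v}_j$, this forces $\hat f({\bf v}_j)\hat f({\bf v}_{2^n-1-j})=(-1)^{f({\bf v}_j)+f(\bar{\bf v}_j)}=1$ for every $j$. Substituting into \eqref{[2]} yields
$$M_{0,2^n-1}=\sum_{j=0}^{2^{n-1}-1}2\cdot 1=2^n.$$
On the other hand, if $f$ were bent then Theorem \ref{Had} would make $M_{\hat f}=(\hat f({\bf v}_i\oplus {\bf v}_j))$ a Hadamard matrix, so $M=M_{\hat f}M_{\hat f}^T=2^nI_{2^n}$ and in particular $M_{0,2^n-1}=0$. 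Since ${\bf v}_0\neq {\bf v}_{2^n-1}$, this off-diagonal entry must vanish, contradicting $M_{0,2^n-1}=2^n$. Hence $f$ is not bent whenever $k\neq m$.

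The only real obstacle is recognizing the complement-invariance of full-cycle degree-two rotation-symmetric functions in an even number of variables; once that is in hand, formula \eqref{[2]} immediately inflates $M_{0,2^n-1}$ to $2^n$ and blocks the Hadamard condition. Every other step is a short computation using results already established in the paper.
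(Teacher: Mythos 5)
Your proof is correct and follows essentially the same route as the paper: reduce to the full-cycle case via Lemma \ref{f0onlyshort}, show $f$ is invariant under bitwise complementation, substitute into \eqref{[2]} to get $M_{0,2^n-1}=2^n\neq 0$, and conclude non-bentness from the Hadamard criterion of Theorem \ref{Had}, with the $k=m$ direction handled by Lemma \ref{f0bent}. The only difference is how the complement-invariance is verified: you expand $(1\oplus x_i)(1\oplus x_{k+i})$ and note that the constant and linear parts cancel because $n$ is even and $i\mapsto k+i$ is a bijection mod $n$, whereas the paper fixes an input and counts the monomials evaluating to $\{1,1\}$, $\{1,0\}$, $\{0,0\}$ using the fact that each variable occurs in exactly two monomials; your algebraic computation is a clean equivalent of that counting step.
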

\begin{proof}
Assume $k\neq m=\frac{n}{2}$.  Then by Lemma \ref{f0onlyshort}, $f$ is full cycle and we write $f=\bigoplus\limits_{i=0}^{n-1}x_ix_{k+i}$.  Then each variable  $x_i,$  $0 \leq i \leq n-1,$  appears in two distinct monomials in $f$ and there are $n$ (even) monomials in $f$.  We will first show $f({\bf v}_i)=f({\bf v}_{2^n-1-i}),~0\leq i\leq n-1,$ and hence $\hat{f}({\bf v}_i)=\hat{f}({\bf v}_{2^n-1-i})$.  Notice $f({\bf v}_i)=0$ when there are an even number of monomials in $f$ which have a value of $1$ and $f({\bf v}_i)=1$ when there are an odd number of monomials which have a value of $1$.  

Fix $i=I$ and let $wt({\bf v}_I)=l.$ 
Suppose $f({\bf v}_I)$ has $r,s,\text{ and }t$ monomials of the form $x_jx_h$ with $\{x_j,x_h\}=\{1,1\},\{1,0\},\text{ and } \{0,0\}$ respectively, so $f({\bf v}_I) \equiv r \bmod{2}$.  Since each variable  $x_i$  appears in two distinct monomials of $f$, then by counting the number of times  $x_i=1$  in $f$ we see $2r+s=2l$ and $s$ must be even.  Also, we know there are $n$ monomials in $f$ so $r+s+t=n$.  Clearly $f({\bf v}_I)=0 \iff$ $r$ is even $\iff$ $t$ is even, since $s,~n$ are both even and $r+s+t=n$.  It is easy to see that $f({\bf v}_{2^n-1-I})=f(\bar{{\bf v}_I}) \equiv t \bmod{2}$.  Thus $f({\bf v}_{2^n-1-I})=0\iff\ t$ is even.
 
Hence for any $i,~ 0 \leq i \leq 2^n-1,$ we have $f({\bf v}_i)=f({\bf v}_{2^n-1-i})$ and therefore $\hat{f}({\bf v}_i)=\hat{f}({\bf v}_{2^n-1-i})$.  So \eqref{[2]} becomes:
\begin{equation*}
M_{0,2^n-1}=\sum\limits_{i=0}^{2^{n-1}-1}2(\hat{f}({\bf v}_i))^2=2^n.
\end{equation*}
Thus if $k\neq m$ then $M_{0,2^n-1}\neq 0$ and we have that $(\hat{f}({\bf v}_i\oplus{\bf v}_j))$ is not Hadamard.  Thus by Theorem \ref{Had} $f$ is not bent.  The remainder of the theorem follows from Lemma \ref{f0bent}.
\end{proof}

\begin{corollary}
\label{2f0contain}
If $n=2m$, then any degree $2$ rotation symmetric bent function must contain $f_0=\bigoplus\limits_{i=0}^{m-1}x_ix_{m+i}$.
\end{corollary}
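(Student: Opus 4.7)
The plan is to suppose $f$ is a degree $2$ rotation symmetric bent function that does \emph{not} contain $f_0$, and derive a contradiction by adapting the counting argument from the proof of Theorem \ref{f0only2} to sums of MRS functions.

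First, I would decompose $f = h \oplus g$, where $g$ collects all degree $2$ monomials of $f$ and $h$ is the affine part. Rotation symmetry forces $h = c \oplus a(x_0 \oplus x_1 \oplus \cdots \oplus x_{n-1})$ for some $c, a \in \mathbb{F}_2$, since those are the only RotS affine functions. By Lemma \ref{f0onlyshort}, every MRS function of degree $2$ other than $f_0$ is full-cycle, so under the assumption that $f$ does not contain $f_0$ we may write $g = \bigoplus_{j=1}^N f_{k_j}$, where each $f_{k_j} = \bigoplus_{i=0}^{n-1} x_i x_{k_j+i}$ is full-cycle with $k_j \neq m$.

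Next, I would show that $\hat{f}(\mathbf{v}_I) = \hat{f}(\mathbf{v}_{2^n-1-I})$ for every $I$, using the same parity argument as in Theorem \ref{f0only2}. Fix $I$ with $wt(\mathbf{v}_I) = l$, and let $r, s, t$ count the monomials of $g$ of the form $x_j x_h$ with $\{x_j, x_h\} = \{1,1\}, \{1,0\}, \{0,0\}$ when evaluated at $\mathbf{v}_I$. Every variable of $g$ appears in exactly $2N$ monomials (two per $f_{k_j}$), so $2r + s = 2Nl$, making $s$ even. The total monomial count is $r + s + t = nN$, which is even since $n = 2m$; hence $r \equiv t \pmod 2$. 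Thus $g(\mathbf{v}_I) \equiv r \equiv t \equiv g(\bar{\mathbf{v}}_I) \pmod 2$. For the affine part, $h(\mathbf{v}_I) = c \oplus al$ and $h(\bar{\mathbf{v}}_I) = c \oplus a(n-l) = c \oplus al$ because $n$ is even, so $h(\mathbf{v}_I) = h(\bar{\mathbf{v}}_I)$. Combining these gives $f(\mathbf{v}_I) = f(\mathbf{v}_{2^n-1-I})$ and hence $\hat{f}(\mathbf{v}_I) = \hat{f}(\mathbf{v}_{2^n-1-I})$.

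Finally, I would plug this identity into \eqref{[2]} exactly as in Theorem \ref{f0only2}:
\begin{equation*}
M_{0, 2^n-1} = \sum_{k=0}^{2^{n-1}-1} 2\hat{f}(\mathbf{v}_k)\hat{f}(\mathbf{v}_{2^n-1-k}) = \sum_{k=0}^{2^{n-1}-1} 2(\hat{f}(\mathbf{v}_k))^2 = 2^n.
\end{equation*}
Since $M_{0, 2^n-1}$ is an off-diagonal entry, the matrix $(\hat{f}(\mathbf{v}_i \oplus \mathbf{v}_j))$ cannot be Hadamard, so by Theorem \ref{Had} $f$ is not bent, contradicting the hypothesis. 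There is no serious obstacle: the main content is verifying that the parity input to the Theorem \ref{f0only2} argument, namely $s$ even and $r + t$ even, survives when we pass from a single full-cycle MRS summand to an arbitrary sum of them together with a RotS affine term, and this reduces to the observations that $n$ is even and each variable appears in an even number of degree $2$ monomials.
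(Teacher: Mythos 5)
Your proposal is correct and takes essentially the same route as the paper: both arguments establish $\hat{f}(\mathbf{v}_i)=\hat{f}(\mathbf{v}_{2^n-1-i})$ for all $i$ and then read off $M_{0,2^n-1}=2^n\neq 0$ from \eqref{[2]}, so the matrix is not Hadamard and Theorem \ref{Had} gives the contradiction. The only (harmless) differences are that the paper removes the affine part by citing the invariance of bentness under adding affine terms and obtains the symmetry of $\hat{f}$ by factoring it as the product of the $\hat{g_k}$, each symmetric by the proof of Theorem \ref{f0only2}, whereas you handle the affine part explicitly and rerun the parity count directly on the whole quadratic sum.
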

\begin{proof}
Assume $f$ is a rotation symmetric function of degree $2$.  By \cite[Th. 5.29, p. 83]{CBF} we can assume $f$ has only quadratic monomials, so $f$ has SANF 
$\bigoplus\limits_{i=1}^{n-1}a_{i}x_0x_i$ where the coefficients  $a_{i} \in \mathbb{F}_2$.  Let $g_k$ be the function with SANF $a_{k}x_0x_k$ where $1\leq k \leq n-1.$  Suppose $f$ does not contain $f_0$, then $a_{\frac{n}{2}}=0$.  We have $g_k= a_k \bigoplus\limits_{i=0}^{n-1}x_ix_{k+i}$ when $1\leq k\leq n-1$ and from the proof of Theorem \ref{f0only2},
since $a_{n/2}=0$  we know $\hat{g_k}({\bf v}_i)=\hat{g_k}({\bf v}_{2^n-1-i})$ for $1 \leq i \leq n-1$.  Also, $\hat{f}=\widehat{\bigoplus\limits_{k=1}^{n-1}g_k}=\prod\limits_{k=1}^{n-1}\hat{g_k}$.  Thus 
$\hat{f}({\bf v}_i)=\prod\limits_{k=1}^{n-1}\hat{g_k}({\bf v}_i)=\prod\limits_{k=1}^{n-1}\hat{g_k}({\bf v}_{2^n-1-i})=\hat{f}({\bf v}_{2^n-1-i})$.  So \eqref{[2]} gives:
\begin{equation*}
M_{0,2^n-1}=\sum\limits_{k=0}^{2^{n-1}-1}2(\hat{f}({\bf v}_k))^2=2^n.
\end{equation*}
Thus $M_{0,2^n-1}\neq 0$, so $(\hat{f}({\bf v}_i\oplus{\bf v}_j))$ is not Hadamard and therefore by Theorem \ref{Had} $f$ is not bent.
\end{proof} 

Necessary and sufficient conditions for a bent quadratic function are given in \cite[Lemma 1, p. 4909]{Gao12}.  The Lemma gives an alternate proof of Corollary \ref{2f0contain} as discussed in 
 \cite[Rmk. 1, p. 4910]{Gao12}.  A characterization of {\em any}  bent RS function of degree 2 follows from \cite[Th. 3.7, p. 6]{ZG13}.  Examples show that \cite[Th. 3.1, p. 3]{ZG13} is false.

We will now look at rotation symmetric functions $f=\bigoplus\limits_i f_i$ of any degree where each of the $f_i$ is a  short-cycle MRS Boolean function.  We will prove that $f$ is bent only if $f_i=f_0$ for some $i$.  To do this we need the following result from \cite[pp. 218]{zheng}.

\begin{theorem}
\label{Jnonlin}
Let $f$ be a function on $\mathbb{V}_n$ and $J\subset \{0,1,2,\ldots,n-1\}$ such that $f$ does not contain any term $x_{j_1}\cdots x_{j_t}$ where $t>1$ and $j_1,\ldots,j_t\in J$.  Then the nonlinearity of $f$ satisfies $\mathcal{N}_f\leq 2^{n-1}-2^{s-1}$, where $s=|J|$.
\end{theorem}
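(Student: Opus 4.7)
The plan is to exploit the hypothesis that $f$ is affine in the variables $\{x_j : j \in J\}$ whenever the remaining coordinates are fixed, and then force the Walsh spectrum of $\hat{f}$ to have a sufficiently large coefficient. Writing $J^c = \{0,1,\ldots,n-1\} \setminus J$ and splitting $x = (x_J, y)$ with $y \in \mathbb{V}_{n-s}$ indexed by $J^c$, the hypothesis yields the decomposition
\[
f(x) = f_0(y) \oplus \bigoplus_{j \in J} x_j f_j(y),
\]
where each $f_0, f_j$ is a Boolean function on $\mathbb{V}_{n-s}$. Since $\mathcal{N}_f = 2^{n-1} - \tfrac{1}{2}\max_w |W(\hat{f})(w)|$, it suffices to exhibit one $w$ with $|W(\hat{f})(w)| \geq 2^s$.

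The main calculation is to split the Walsh sum at $w = (w_J, w_{J^c})$ into the inner sum over $x_J \in \mathbb{V}_s$ and the outer sum over $y$. The inner sum equals $2^s$ when $f_j(y) = w_j$ for every $j \in J$ and vanishes otherwise, leaving
\[
W(\hat{f})(w) = 2^s \sum_{y \in T(w_J)} (-1)^{f_0(y) + w_{J^c} \cdot y}, \qquad T(w_J) = \{ y : f_j(y) = w_j \text{ for all } j \in J\}.
\]
To guarantee this is nonzero I would fix any $y^* \in \mathbb{V}_{n-s}$ and set $w_j := f_j(y^*)$ for $j \in J$, which forces $y^* \in T(w_J)$; then the auxiliary function $\chi(y) = (-1)^{f_0(y)} \mathbf{1}_{T(w_J)}(y)$ is not identically zero on $\mathbb{V}_{n-s}$. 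Its Walsh transform (over the $n-s$ remaining coordinates) is integer-valued and not identically zero, so some choice of $w_{J^c}$ makes the outer sum have absolute value at least $1$, giving $|W(\hat{f})(w)| \geq 2^s$ and hence $\mathcal{N}_f \leq 2^{n-1} - 2^{s-1}$.

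The main obstacle is really only bookkeeping: one has to set up the decomposition carefully and verify that the inner sum over $x_J$ collapses to the indicator form claimed above. After that, the outer step reduces to the elementary observation that a nonzero integer-valued function on $\mathbb{V}_{n-s}$ cannot have an identically vanishing Walsh spectrum. The degenerate cases $s \in \{0,1\}$ need no separate treatment, since the hypothesis becomes vacuous and the resulting inequality is a trivial bound on $\mathcal{N}_f$.
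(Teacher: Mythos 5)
Your argument breaks down at the very first display. The hypothesis of the theorem only forbids monomials of degree greater than $1$ \emph{all} of whose variables are indexed by $J$; it does not forbid monomials that contain two or more variables from $J$ together with at least one variable outside $J$. Such a monomial, after fixing the outside variables $y$, is quadratic or higher in the $J$-variables, so the claimed decomposition $f(x)=f_0(y)\oplus\bigoplus_{j\in J}x_jf_j(y)$ (i.e.\ affineness in $x_J$ for every fixed $y$) simply does not follow. Concretely, take $n=4$, $J=\{0,1,2\}$ and $f=x_0x_1x_3$: no term of $f$ has all its indices in $J$, yet fixing $x_3=1$ leaves $x_0x_1$, which is not affine in the $J$-variables. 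Everything after the first display (the collapse of the inner sum to an indicator, the choice of $w_J$ from an arbitrary $y^*$) depends on that decomposition, so what you have actually proved is the theorem under the much stronger hypothesis that every monomial of $f$ contains at most one variable from $J$. That stronger version would not even suffice for the paper's applications: in Theorem \ref{shortbent} and Lemma \ref{oddind} the monomials contain many variables indexed by $J$, just never all of them.

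The paper itself does not prove this statement; it quotes it from Zheng--Zhang--Imai. The standard proof is the restriction version of your idea, and your Fourier step can be salvaged once you specialize correctly: restrict $f$ to the $s$-dimensional subspace $E=\{x: x_i=0 \text{ for all } i\notin J\}$. On $E$ every monomial containing a variable outside $J$ vanishes, and by hypothesis the surviving monomials have degree at most $1$, so $f|_E$ is affine, say $f(x_J,0)=\varepsilon\oplus w_J^*\cdot x_J$. Then summing the Walsh coefficients of $\hat f$ at $(w_J^*,w_{J^c})$ over all $2^{n-s}$ choices of $w_{J^c}$ kills every $y\neq 0$ and yields $2^{n-s}\sum_{x_J}(-1)^{f(x_J,0)+w_J^*\cdot x_J}=\pm 2^n$, so some coefficient has absolute value at least $2^s$ and hence $\mathcal{N}_f\le 2^{n-1}-2^{s-1}$. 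Equivalently one may invoke the known fact that a function which agrees with an affine function on an $s$-dimensional flat has nonlinearity at most $2^{n-1}-2^{s-1}$. The essential correction is that you must set the non-$J$ variables to $0$ (not to an arbitrary $y^*$), because only on that subspace does the hypothesis give affineness.
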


\begin{theorem}
\label{shortbent}
Let $n=2m$, then $f_0=\bigoplus\limits_{i=0}^{m-1}x_ix_{m+i}$ is the only bent  short-cycle MRS Boolean function in $n$ variables. 
\end{theorem}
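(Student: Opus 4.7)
The plan is to apply Theorem~\ref{Jnonlin} to rule out any short-cycle MRS function $f \neq f_0$ being bent. Write $f$ with SANF $x_0 x_{i_1} \cdots x_{i_{d-1}}$, set $T = \{0, i_1, \ldots, i_{d-1}\} \subseteq \mathbb{Z}/n$, and let the orbit size be $n/w$ with $w \geq 2$ (short-cycle). The monomials of $f$ are the $n/w$ distinct translates $T+k$. Since $\rho^{n/w}$ fixes $T$ setwise, $T$ is a union of $r$ cosets of the order-$w$ subgroup $\langle n/w \rangle \leq \mathbb{Z}/n$, and $d = |T| = wr$.

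The goal is to produce a set $J \subseteq \{0, \ldots, n-1\}$ with $|J| > m$ such that no $T + k$ is contained in $J$; equivalently, $J^c$ is a hitting set for the family $\{T+k\}$ of size less than $m$. Once such a $J$ is found, Theorem~\ref{Jnonlin} gives
\[
\mathcal{N}_f \leq 2^{n-1} - 2^{|J|-1} \leq 2^{n-1} - 2^m < 2^{n-1} - 2^{m-1},
\]
so $f$ fails to be bent by Theorem~\ref{Had}(iii). I split on $(w, r)$. If $w \geq 3$, the family has only $n/w \leq n/3 < m$ monomials, so picking one element per monomial gives a hitting set of size less than $m$. If $w = 2$ and $r = 1$, the single coset of $\langle m \rangle$ through $0$ forces $T = \{0, m\}$ and hence $f = f_0$.

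The technical heart is the case $w = 2$, $r \geq 2$. Writing $T = \bigcup_{s \in S_0}\{s, s+m\}$ for some $S_0 \subseteq \{0, \ldots, m-1\}$ with $|S_0| = r$, reduction modulo $m$ shows that an element of $\mathbb{Z}/n$ lies in $T + k$ precisely when its residue mod $m$ lies in $S_0 + k$. A hitting set in $\mathbb{Z}/n$ is therefore obtained from any $H' \subseteq \mathbb{Z}/m$ satisfying $H' - S_0 = \mathbb{Z}/m$. A greedy argument --- one initial translate of $-S_0$ covers $r$ elements of $\mathbb{Z}/m$, and each of the remaining $m - r$ elements can be reached by one additional translate --- yields $|H'| \leq m - r + 1 \leq m - 1 < m$, producing the required hitting set. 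I expect this coset/covering reduction to be the main obstacle, since it requires translating the geometric hitting condition in $\mathbb{Z}/n$ into a covering problem on the quotient $\mathbb{Z}/m$; the other cases reduce to counting monomials against the nonlinearity threshold. Combined with Lemma~\ref{f0bent} to ensure that $f_0$ itself is bent, this establishes that $f_0$ is the unique bent short-cycle MRS function.
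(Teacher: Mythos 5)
Your proposal is correct and takes essentially the same route as the paper: after identifying $f_0$ as the unique short cycle arising from a single coset of $\{0,m\}$, both arguments apply Theorem \ref{Jnonlin} with a set $J$ of size at least $m+1$ containing no monomial of $f$, forcing $\mathcal{N}_f\leq 2^{n-1}-2^{m}<2^{n-1}-2^{m-1}$ and hence non-bentness via Theorem \ref{Had}. The only difference is in how $J$ is exhibited: the paper uses the single explicit set $J=\{m-1,m,\ldots,n-1\}$ (equivalently the hitting set $\{0,\ldots,m-2\}$) in both the $m$-monomial and fewer-than-$m$-monomial cases, whereas you build the hitting set generically by a greedy covering of $\mathbb{Z}/m$.
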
 
\begin{proof}
Let $f$ have SANF $x_0x_{a_1}\cdots x_{a_{d-1}}$ and suppose $f$ is a short-cycle function.  Then the number of monomials in $f$ is $k$ where $1\leq k<2m$.  Thus,
\begin{equation*} 
f=x_0x_{a_1}\cdots x_{a_{d-1}}\oplus x_1x_{{a_1}+1}\cdots x_{a_{d-1}+1}\oplus\cdots\oplus x_{k-1}x_{a_1+k-1}\cdots x_{a_{d-1}+k-1}
\end{equation*} 
It is easy to see then that $\{0,a_1,\ldots,a_{d-1}\}=\{k,a_1+k,\ldots,a_{d-1}+k\}=\ldots=\{lk,a_1+lk,\ldots,a_{d-1}+lk\}$ where $lk<2m$.  
By Lemma \ref{Ondiv} we know $k$ must divide $n=2m$, thus $1\leq k\leq m$.  Thus by rotation symmetry the tuple $\{a_i,a_i+k,\ldots,a_i+lk\}$ appears in each monomial of $f$, for every $i,~0 \leq i \leq d - 1.$  If $k=m$ then $l=1$ and $f$ has SANF
\begin{equation}
\label{[4]}
f=x_0x_{a_1}x_{a_2}\ldots x_{a_{g-1}}x_mx_{m+a_1}\ldots x_{m+a_{g-1}}, 
\end{equation}  
where $0<a_1<a_2<\ldots<a_{g-1}<m$. First consider $g>1$.  Suppose $a_k<m$ and $a_k+1=m;$ then $m+a_k<2m$ and $m+a_k+1=2m$.  So if $x_{a_k+1}=x_{m}$ then $x_{m+a_k+1}=x_0$.  Thus for each monomial in $f,$ $g$ of the variables $x_i$ have $i<m$ and $g$ have $i \geq m$.  Let $J=\{m-1,m,\ldots,n-1\}$.  Thus by Theorem \ref{Jnonlin} we have $\mathcal{N}_f\leq 2^{n-1}-2^{m} < 2^{n-1}-2^{m-1},$ so $f$ is not bent by Theorem \ref{Had}.  If $g=1$ then $f=f_0$ which we know is bent.
Thus if $k=m$ then $f$ is bent only if $f=f_0$.  If $k<m$ then by regrouping the terms we have that,
\begin{equation*}
f=\bigoplus\limits_{i=0}^{k-1}x_ix_{b_1+i}\cdots x_{b_g+i}x_{k+i}x_{b_1+k+i}\cdots x_{b_g+k+i}\cdots x_{lk+i}x_{b_1+lk+i}\cdots x_{b_g+lk+i}
\end{equation*}  
From the first term in the monomial, $x_i$, we see that each monomial contains at least one element of the set $\{x_0,x_1,\ldots,x_{k-1}\}$.  Since $k<m$ then $k-1<m-1$.  So if we let $J=\{m-1,m,m+1,\ldots,n-1\}$, then by Theorem \ref{Jnonlin} we have $\mathcal{N}_f\leq 2^{n-1}-2^m<2^{n-1}-2^{m-1}$.  Thus $f$ is not bent by Theorem \ref{Had}.
\end{proof}

\begin{theorem}
\label{sumshortfncs}
Let $n=2m$ and let $f$ in $n$ variables be a sum of MRS short-cycle  functions, so $f=\bigoplus\limits_if_i$, where each $f_i$ has fewer than $n$ monomials.  If $f$ is bent then one of the $f_i$ is $f_0=\bigoplus\limits_{i=0}^{m-1}x_ix_{m+i}$. 
\end{theorem}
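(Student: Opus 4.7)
The plan is to prove the contrapositive: assume $f = \bigoplus_i f_i$ is a sum of short-cycle MRS summands with no summand equal to $f_0$, and show $f$ cannot be bent. The strategy mirrors Corollary~\ref{2f0contain}: analyze the entry $M_{0,2^n-1}$ from \eqref{[2]} and show it is nonzero, so that by Theorem~\ref{Had} the matrix $(\hat f({\bf v}_i\oplus{\bf v}_j))$ fails to be Hadamard. Setting $g({\bf v}) := f({\bf v}) \oplus f(\bar{\bf v})$ and pairing the summation index $k$ with its complement, \eqref{[2]} rearranges to $M_{0,2^n-1} = 2^n - 2\,wt(g)$, reducing the problem to showing that $g$ is not balanced.

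Writing $g = \bigoplus_i g_i$ with $g_i = f_i \oplus f_i(\bar{\,\cdot\,})$, and observing that $M({\bf v})\,M(\bar{\bf v}) = 0$ for any monomial $M$ of positive degree, one checks that $M \oplus M(\bar{\,\cdot\,})$ equals the indicator of the event ``all variables of $M$ agree at ${\bf v}$''. Summing over monomials then gives
\[
g({\bf v}) \;\equiv\; \#\{M \in f : M \text{ is monochromatic in the } 0/1\text{-partition of } {\bf v}\} \pmod{2}.
\]
A direct computation shows that for $f_0$ alone this parity count is the balanced affine function $g_0 = m \oplus x_0 \oplus \cdots \oplus x_{n-1}$, and it is precisely this affine contribution that could make $g$ balanced when $f_0$ appears in the sum. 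By Lemma~\ref{f0onlyshort} every $f_i \neq f_0$ has $\deg(f_i)\geq 3$, and the representative monomial of such an $f_i$ is a union of cosets of $\langle k_i \rangle$ in $\mathbb{Z}/n$ of size $n/k_i \geq 2$.

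The next step is to argue, in the absence of $f_0$, that $g$ cannot have weight $2^{n-1}$. I expect a case split echoing the proof of Theorem~\ref{shortbent}: handle the $k_i = m$ case (monomials are unions of $\geq 2$ pairs $\{a,a+m\}$) and the $k_i < m$ case (each coset has size $\geq 3$) separately, and show the total parity count is unbalanced in either case because each $g_i$ contributes only terms of degree $\geq 2$ to the ANF of $g$, leaving no linear piece to achieve weight $2^{n-1}$. A cleaner alternative would be to mimic the Theorem~\ref{Jnonlin}-based argument used in Theorem~\ref{shortbent}: exhibit a single set $J \subseteq \{0,\ldots,n-1\}$ of size at least $m+1$ such that no monomial of any $f_i$ contains two or more variables of $J$; Theorem~\ref{Jnonlin} applied to the sum then gives $\mathcal{N}_f \leq 2^{n-1} - 2^m < 2^{n-1} - 2^{m-1}$, so $f$ is not bent by Theorem~\ref{Had}(iii).

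The main obstacle will be this final step: either the parity/weight computation for $g$, or the uniform verification of a common $J$. Since the short-cycle divisor $k_i$ and the multi-coset parameter vary across summands, the argument must be handled uniformly in $k_i \in \{\text{proper divisors of } n\}$, and will likely reuse the case structure already introduced inside the proof of Theorem~\ref{shortbent}.
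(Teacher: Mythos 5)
Your proposal is a plan whose decisive step is left open, and the one finishing argument you do commit to is not valid. In the complementation route, the reduction $M_{0,2^n-1}=2^n-2\,wt(g)$ with $g({\bf v})=f({\bf v})\oplus f(\bar{\bf v})$ is correct, but the proposed conclusion --- that each $g_i$ contributes only terms of degree $\geq 2$, ``leaving no linear piece to achieve weight $2^{n-1}$'' --- is not a valid principle: a Boolean function with no affine part can perfectly well be balanced (e.g.\ $x_0x_1\oplus x_0x_2=x_0(x_1\oplus x_2)$ is balanced), so the absence of a linear term in $g$ does not show $g$ is unbalanced, and nothing in the proposal replaces this step. In your alternative route via Theorem \ref{Jnonlin} you never exhibit the set $J$, and you also misquote the hypothesis of that theorem: it requires that no term of degree $>1$ have \emph{all} of its variables in $J$, not that no monomial contain two or more variables of $J$. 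The distinction matters, because the set that actually works, $J=\{m-1,m,\ldots,n-1\}$ (of size $m+1$), is met in several variables by every short-cycle monomial of type \eqref{[4]} (all its indices $\geq m$ lie in $J$); what saves the argument is only that each such monomial also contains at least one variable $x_i$ with $i<m-1$: for $k=m$ the condition $g\geq 2$ forces two distinct indices below $m$, hence one below $m-1$, and for $k<m$ every monomial contains some $x_i$ with $i\leq k-1<m-1$.

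That observation, imported from the proof of Theorem \ref{shortbent}, is essentially the whole content of the paper's proof: assuming no $f_i$ is $f_0$, every monomial of $f$ has a variable outside $J=\{m-1,m,\ldots,n-1\}$, so Theorem \ref{Jnonlin} applies to the entire sum $f$ at once and gives $\mathcal{N}_f\leq 2^{n-1}-2^{m}<2^{n-1}-2^{m-1}$, whence $f$ is not bent by Theorem \ref{Had}. Your proposal correctly anticipates this tool and even the $k=m$ versus $k<m$ case split, but it stops exactly where the verification has to be carried out; as submitted it has a genuine gap, and the route it does flesh out rests on a false claim.
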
 
\begin{proof}
Assume $f$ is bent and no $f_i$ is $f_0$.  Then any short-cycle of the form in equation \eqref{[4]} has $g>1$ and from the proof of Theorem \ref{shortbent} we see that each monomial in $f$ contains at least one variable $x_i$ where $i<m-1$.  Thus if we let $J=\{m-1,m,\ldots,n-1\}$ then $f$ does not contain any term $x_{j_1}\cdots x_{j_t}$ where $t>1$ and $j_i\in J$; hence from Theorem \ref{Jnonlin} we have $\mathcal{N}_f\leq 2^{n-1}-2^m<2^{n-1}-2^{m-1}$.  Thus $f$ cannot be bent, contradiction.  Therefore one of the $f_i$ is $f_0$.
\end{proof} 

\section{Homogeneous rotation symmetric bent functions when n=2p} 
\label{HomRSprime}

Using ideas similar to those in section \ref{HomRSany}, we can prove that the only homogeneous MRS bent function in $n=2p$ variables, where $p>2$ is prime, is $f_0=\bigoplus\limits_{i=0}^{p-1}x_ix_{p+i}$.  By a different method, this result was proven for any $n$ by Meng et al. \cite[Th. 11, pp. 1114]{Me10}

Notice we have already proven this for some cases in section \ref{HomRSany}.  For the remainder of the cases we must first further simplify \eqref{[2]}.  To do this we will need a few facts about the rotation symmetric equivalence classes of $\mathbb{V}_n$.

\begin{lemma}
\label{Ondiv}
$|O_n({\bf v}_i)|$ divides n.
\end{lemma}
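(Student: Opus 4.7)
The statement is the standard orbit-size-divides-group-order fact applied to the cyclic group $\langle \rho \rangle$ acting on $\mathbb{V}_n$. The plan is to prove it directly, without invoking the orbit-stabilizer theorem by name, since the paper is self-contained about the $\rho$-action.

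First I would let $k = |O_n({\bf v}_i)|$ and observe that $k$ equals the smallest positive integer such that $\rho^k({\bf v}_i) = {\bf v}_i$. Indeed, the orbit $O_n({\bf v}_i)$ is enumerated as $\{{\bf v}_i, \rho({\bf v}_i), \rho^2({\bf v}_i), \ldots\}$, and if $k$ is minimal with $\rho^k({\bf v}_i) = {\bf v}_i$, then the elements ${\bf v}_i, \rho({\bf v}_i), \ldots, \rho^{k-1}({\bf v}_i)$ are pairwise distinct (any earlier repetition would contradict minimality of $k$), while $\rho^j({\bf v}_i)$ for $j \ge k$ cycles back modulo $k$.

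Next I would use that $\rho^n$ is the identity permutation on $\mathbb{V}_n$ (a cyclic rotation of $n$ coordinates by $n$ positions), so in particular $\rho^n({\bf v}_i) = {\bf v}_i$. Write $n = qk + r$ with $0 \le r < k$ by the division algorithm. Then
\[
{\bf v}_i = \rho^n({\bf v}_i) = \rho^r\!\bigl(\rho^{qk}({\bf v}_i)\bigr) = \rho^r({\bf v}_i),
\]
where the last equality uses $\rho^{qk}({\bf v}_i) = (\rho^k)^q({\bf v}_i) = {\bf v}_i$. By the minimality of $k$, the only possibility is $r = 0$, so $k \mid n$, i.e.\ $|O_n({\bf v}_i)|$ divides $n$.

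There is no real obstacle here; this is a short lemma whose only subtlety is making sure one has justified that the orbit size coincides with the minimal period $k$, which is the only ingredient the division-algorithm argument then needs.
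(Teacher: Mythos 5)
Your proof is correct, but it takes a different route from the paper. The paper's proof is the one-line group-theoretic argument: it views $O_n({\bf v}_i)$ as the orbit of ${\bf v}_i$ under the group $G$ of cyclic shifts and invokes the orbit--stabilizer relation $|O_n({\bf v}_i)|=|G:G_{{\bf v}_i}|$, so the orbit size divides $|G|=n$. You instead argue from first principles: you identify the orbit size with the minimal period $k$ of ${\bf v}_i$ under $\rho$ (using that $\rho$ is a bijection, so an earlier repetition $\rho^a({\bf v}_i)=\rho^b({\bf v}_i)$ with $a<b<k$ would cancel to contradict minimality), and then apply the division algorithm to $n=qk+r$ together with $\rho^n=\mathrm{id}$ to force $r=0$. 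Both arguments are complete; the paper's buys brevity by citing a standard theorem, while yours is more elementary and self-contained, essentially reproving orbit--stabilizer in the special case of a cyclic group, at the cost of a few extra lines. The only point you gloss slightly is the justification that distinct powers $\rho^0({\bf v}_i),\ldots,\rho^{k-1}({\bf v}_i)$ are pairwise distinct, which needs the invertibility of $\rho$ to shift a repetition back to the base point, but this is immediate since $\rho$ is a permutation of the coordinates.
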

\begin{proof}
$O_n({\bf v}_i)$ is the orbit generated by ${\bf v}_i\in\mathbb{V}_n$ under the action of $G$ where $G$ is the group of left cyclic shifts.  Thus $|O_n({\bf v}_i)|=|G:G_{{\bf v}_i}|$, where $G_{{\bf v}_i}=\{\rho\in G|\rho({\bf v}_i)={\bf v}_i\}$.   Since $|G|=n$ then $|O_n({\bf v}_i)|$ divides $n.$
\end{proof}

\begin{lemma}
$|O_n({\bf v}_i)|=2\iff {\bf v}_i\sim{\bf v}_{\frac{2^n-1}{3}}$.
\end{lemma}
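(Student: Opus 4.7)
The plan is to identify ${\bf v}_{(2^n-1)/3}$ concretely as the alternating coordinate string $(0,1,0,1,\ldots,0,1)$ and then argue that, up to rotation, this is the unique vector in $\mathbb{V}_n$ whose orbit has size exactly $2$.

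First I would verify the identification. Using the formula $i=\sum_{u=1}^{n}a_{n-u}2^{u-1}$ from the start of the proof of Lemma~\ref{sim}, writing $n=2m$ and plugging in $a_j=0$ for $j$ even and $a_j=1$ for $j$ odd, one obtains $i=\sum_{k=0}^{m-1}2^{2k}=\frac{4^m-1}{3}=\frac{2^n-1}{3}$. Thus ${\bf v}_{(2^n-1)/3}=(0,1,0,1,\ldots,0,1)$. The image $\rho({\bf v}_{(2^n-1)/3})=(1,0,1,0,\ldots,1,0)$ differs from ${\bf v}_{(2^n-1)/3}$, while $\rho^2$ clearly fixes it since the string has period $2$. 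This already settles the $(\Leftarrow)$ direction: the orbit of ${\bf v}_{(2^n-1)/3}$ has size exactly $2$, and orbit size is constant on $\sim$-equivalence classes.

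For the $(\Rightarrow)$ direction, suppose $|O_n({\bf v}_i)|=2$. By the orbit-stabilizer argument used in the proof of Lemma~\ref{Ondiv}, the stabilizer $G_{{\bf v}_i}$ has index $2$ in the cyclic group $G=\langle\rho\rangle$ of order $n$, which forces $G_{{\bf v}_i}=\langle\rho^2\rangle$. Hence $\rho^2({\bf v}_i)={\bf v}_i$, so the coordinate string of ${\bf v}_i$ has period dividing $2$, while $\rho({\bf v}_i)\neq{\bf v}_i$ rules out period $1$. The only two binary strings of length $n$ with exact period $2$ are $(0,1,0,1,\ldots,0,1)={\bf v}_{(2^n-1)/3}$ and its $\rho$-image $(1,0,1,0,\ldots,1,0)$, both of which are $\sim$-equivalent to ${\bf v}_{(2^n-1)/3}$.

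There is no real obstacle in the argument. The one spot that requires a bit of care is the indexing convention used in the paper (reading the tuple left-to-right as high-order bits first), which one must respect in order to confirm that $(2^n-1)/3$ corresponds to the string beginning with $0$ rather than $1$; but this is a direct computation from the definition of ${\bf v}_i$ given in the introduction.
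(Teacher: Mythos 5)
Your proof is correct and follows essentially the same route as the paper's: both identify ${\bf v}_{(2^n-1)/3}$ as the alternating string $(0,1,0,1,\ldots,0,1)$ via the binary-expansion convention and observe that the orbit of size $2$ consists precisely of the two alternating strings. Your version is in fact slightly more complete, since the orbit-stabilizer argument (the unique index-$2$ subgroup $\langle\rho^2\rangle$ forcing period $2$) justifies the step that the paper's proof simply asserts, and your direct geometric-sum computation of the index replaces the paper's multiplication-by-$3$ check.
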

\begin{proof}
$\Rightarrow$: $|O_n({\bf v}_i)|=2 \iff O_n({\bf v}_i)=\{(1,0,1,0,\ldots,1,0),(0,1,0,1,\ldots,0,1)\}$.  Let ${\bf v}_i=(0,1,0,1,\ldots,0,1) \Rightarrow i=1+2^2+2^4+\ldots+2^{n-2}\Rightarrow$
\begin{eqnarray*}
3i &=&(1+2)i=(1+2)(1+2^2+2^4+\ldots+2^{n-2})\\
&=&(1+2^2+2^4+\ldots+2^{n-2})+(2+2^3+2^5+\ldots+2^{n-1})\\
&=&1+2+2^2+2^3+2^4+\ldots+2^{n-1}
\end{eqnarray*}
$\Rightarrow {\bf v}_{3i}={\bf v}_{2^n-1}$ since ${\bf v}_{2^n-1}=(1,1,1,\ldots,1)$.  So we have $i=\frac{2^n-1}{3}$.  Since $(0,1,0,1,\ldots,0,1)\sim(1,0,1,0,\ldots,1,0)$ we have $(1,0,1,0,\ldots,1,0) \sim{\bf v}_{\frac{2^n-1}{3}}$.\\
$\Leftarrow$: Reverse the argument used above.
\end{proof}

\begin{theorem}
If $n$ is even and $\frac{n}{2}$ is an odd prime, then
\begin{eqnarray}
\label{[3]}
M_{0,2^n-1} &=& 2+2\hat{f}({\bf v}_0)\hat{f}({\bf v}_{2^n-1})+2n\sum\limits_{\substack{{\bf v}_k\in\mathcal{G}_n \\ |O_n({\bf v}_k)|=n \\ {\bf v}_k\nsim{\bf v}_{2^n-1-k}} }\hat{f}({\bf v}_k)\hat{f}({\bf v}_{2^n-1-k})\nonumber \\
& &+n\sum\limits_{\substack{{\bf v}_k\in\mathcal{G}_n \\ |O_n({\bf v}_k)|=\frac{n}{2}}}\hat{f}({\bf v}_k)\hat{f}({\bf v}_{2^n-1-k})+n\sum\limits_{\substack{{\bf v}_k\in\mathcal{G}_n \\ |O_n({\bf v}_k)|=n \\ {\bf v}_k\sim{\bf v}_{2^n-1-k}}}(\hat{f}({\bf v}_k))^2.
\end{eqnarray}
\end{theorem}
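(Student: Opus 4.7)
The strategy is to partition the summation index set $\{0,1,\ldots,2^n-1\}$ of equation \eqref{[2]} into rotation orbits and exploit that $\hat{f}$ is constant on each orbit. Since $2^n-1$ is odd, the involution $k\mapsto 2^n-1-k$ has no fixed points, so \eqref{[2]} can equivalently be written as
\[
M_{0,2^n-1}=\sum_{k=0}^{2^n-1}\hat{f}({\bf v}_k)\hat{f}({\bf v}_{2^n-1-k}).
\]
Because $f$ is rotation symmetric, $\hat{f}({\bf v}_j)=\hat{f}({\bf v}_k)$ whenever ${\bf v}_j\sim{\bf v}_k$, and by Lemma \ref{conjeq} the map $j\mapsto 2^n-1-j$ sends the orbit of ${\bf v}_k$ bijectively onto the orbit of ${\bf v}_{2^n-1-k}$. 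Hence every orbit $O$ with representative ${\bf v}_k$ contributes exactly $|O_n({\bf v}_k)|\hat{f}({\bf v}_k)\hat{f}({\bf v}_{2^n-1-k})$ to $M_{0,2^n-1}$.

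By Lemma \ref{Ondiv} the orbit sizes divide $n=2p$, and since $p$ is an odd prime the only possibilities are $1$, $2$, $p$, and $n$. I would dispose of the low-order cases first: the two singleton orbits $\{{\bf v}_0\}$ and $\{{\bf v}_{2^n-1}\}$ form a complement pair and jointly contribute $2\hat{f}({\bf v}_0)\hat{f}({\bf v}_{2^n-1})$, while the unique size-$2$ orbit $\{{\bf v}_{(2^n-1)/3},{\bf v}_{2(2^n-1)/3}\}$ located in the preceding lemma is self-complementary (since $2^n-1-(2^n-1)/3=2(2^n-1)/3$) and contributes $2(\hat{f}({\bf v}_{(2^n-1)/3}))^2=2$, producing the leading constant in \eqref{[3]}.

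For the larger orbits I would group each orbit with its complement. A size-$p$ orbit is fixed by $\rho^p$, so its vectors have period $p$ and therefore even Hamming weight $2w$; self-complementarity would force the weight to equal $n/2=p$, giving $w=p/2$, which is impossible when $p$ is odd. Hence every size-$p$ orbit is paired with a distinct complement orbit, and the pair contributes $p\hat{f}({\bf v}_k)\hat{f}({\bf v}_{2^n-1-k})+p\hat{f}({\bf v}_{2^n-1-k})\hat{f}({\bf v}_k)=n\hat{f}({\bf v}_k)\hat{f}({\bf v}_{2^n-1-k})$. Size-$n$ orbits split into self-complementary ones (each contributing $n(\hat{f}({\bf v}_k))^2$) and genuine complement pairs (each pair contributing $2n\hat{f}({\bf v}_k)\hat{f}({\bf v}_{2^n-1-k})$). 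Collecting these five categories, with the convention in \eqref{[3]} that the non-self-complementary sums list one representative per complement pair, yields the stated formula.

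The central obstacle is the size-$p$ analysis: ruling out self-complementary orbits hinges on the parity fact that $w=p/2$ is not an integer, which is precisely where the hypothesis ``$n/2$ is an odd prime'' is used. A secondary care point is the bookkeeping that distinguishes self-paired orbits (counted once, with coefficient $n$) from complement pairs of distinct orbits (one representative per pair, with coefficient $n$ when $|O|=p$ and $2n$ when $|O|=n$); getting these multiplicities right is what makes the coefficients in \eqref{[3]} come out as written.
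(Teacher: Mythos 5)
Your proposal is correct and follows essentially the same route as the paper: decompose the sum in \eqref{[2]} into rotation orbits (of sizes $1,2,p,n$ by Lemma \ref{Ondiv}), use Lemma \ref{conjeq} to pair each orbit with its complement orbit, and rule out self-complementary orbits of size $p$ by a parity argument that exploits the oddness of $p$. Your parity step (a period-$p$ vector has even weight $2w$, which cannot equal $n/2=p$) is only a mild rephrasing of the paper's comparison of the weights of the half-vector and its complement, and your explicit one-representative-per-complement-pair convention matches the paper's grouping of terms that produces the coefficients $n$ and $2n$.
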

\begin{proof}
Let $n=2p$ where $p$ is an odd prime.  Then by Lemma \ref{Ondiv}, $|O_n({\bf v}_i)|=1,2,p,\text{ or } n$.

Let ${\bf v}_i\in\mathbb{V}_n$ such that $|O_n({\bf v}_i)|=1$, then ${\bf v}_i={\bf v}_0$ or ${\bf v}_{2^n-1}$ which appears as the first term of \eqref{[2]} with $k=0$.

If $|O_n({\bf v}_i)|=2$ then ${\bf v}_i={\bf v}_{\frac{2^n-1}{3}}$ and $2^n-1-\frac{2^n-1}{3}=2\frac{2^n-1}{3}$, so ${\bf v}_i\sim{\bf v}_{2^n-1-i}$ and we have one $2(\hat{f}({\bf v}_{\frac{2^n-1}{3}})^2$ term.  Since $\hat{f}({\bf v}_i)=\pm 1$ for any ${\bf v}_i\in\mathbb{V}_n$ then $(\hat{f}({\bf v}_i))^2=1$ for any ${\bf v}_i$.  So when $k=\frac{2^n-1}{3}$ in \eqref{[2]}, then $2\hat{f}({\bf v}_k)\hat{f}({\bf v}_{2^n-1-k})=2$.

Now suppose ${\bf v}_i=(a_0,a_1,\ldots,a_{n-1})\in\mathcal{G}_n$ and $|O_n({\bf v}_i)|=p$, then ${\bf v}_i'={\bf v}_i''$.  So to consider the cyclic shifts of ${\bf v}_i$ we need only consider the cyclic shifts of $(a_0,a_1,\ldots,a_{p-1})$. Notice ${\bf v}_{2^n-1-i}=(1+a_0,1+a_1,\ldots,1+a_{n-1})$ and we have $(1+a_0,1+a_1,\ldots,1+a_{p-1})=(1+a_{p},1+a_{p+1},\ldots,1+a_{n-1})$ so $|O_n({\bf v}_{2^n-1-i})|=p$ as well.  Since $p$ is odd $(1+a_0,1+a_1,\ldots,1+a_{p-1})$ will have a different number of 1's than $(a_0,a_1,\ldots,a_{p-1})$. Hence ${\bf v}_i\nsim{\bf v}_{2^n-1-i}$ since $(a_0,a_1,\ldots,a_{p-1})\nsim(1+a_0,1+a_1,\ldots,1+a_{p-1})$.  Thus from Lemma \ref{conjeq}, if $|O_n({\bf v}_i)|=p$, then grouping the corresponding terms in \eqref{[2]} by the equivalence class representatives gives $2p\hat{f}({\bf v}_i)\hat{f}({\bf v}_{2^n-1-i})=n\hat{f}({\bf v}_i)\hat{f}({\bf v}_{2^n-1-i})$.

If $|O_n({\bf v}_i)|=n$ and ${\bf v}_i\nsim{\bf v}_{2^n-1-i}$ then grouping the corresponding terms in \eqref{[2]} by the equivalence class representative gives the new coefficient $2n$.  If ${\bf v}_i\sim{\bf v}_{2^n-1-i}$ then this coefficient becomes $2\frac{n}{2}=n$.

Thus \eqref{[2]} reduces to the above equation.
\end{proof}

\begin{lemma} 
\label{fhat}
Let $n=2p$, where $p>2$ is prime. If $f$ is a rotation symmetric bent Boolean function in $n$ variables, then $\hat{f}({\bf v}_0)=-\hat{f}({\bf v}_{2^n-1})$.
\end{lemma}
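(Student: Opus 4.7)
The plan is to exploit the fact that $f$ bent forces $M_{\hat f}$ to be Hadamard, so that the off-diagonal entry $M_{0,2^n-1}$ of $M_{\hat f}M_{\hat f}^T = 2^n I$ must vanish, and then read equation \eqref{[3]} modulo $n$.

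First I would note that $\mathbf{v}_0 \ne \mathbf{v}_{2^n-1}$, so by Theorem \ref{Had} part (ii) we have $M_{0,2^n-1}=0$. Next I would apply equation \eqref{[3]} directly. Since every $\hat f(\mathbf{v}_i) \in \{-1,+1\}$, each of the three sums on the right-hand side is an integer, and the last three terms are all divisible by $n$. Writing this out, there is an integer $k$ such that
\begin{equation*}
0 \;=\; M_{0,2^n-1} \;=\; 2 + 2\hat f(\mathbf{v}_0)\hat f(\mathbf{v}_{2^n-1}) + nk.
\end{equation*}
Hence $2 + 2\hat f(\mathbf{v}_0)\hat f(\mathbf{v}_{2^n-1}) = -nk$.

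The key observation is a size argument: because $\hat f(\mathbf{v}_0)\hat f(\mathbf{v}_{2^n-1}) \in \{-1,+1\}$, the left-hand side lies in $\{0,4\}$. Since $p>2$ is prime we have $n=2p \ge 10 > 4$, so the only way for $-nk$ to lie in $\{0,4\}$ is $k=0$, which forces $2+2\hat f(\mathbf{v}_0)\hat f(\mathbf{v}_{2^n-1}) = 0$. Therefore $\hat f(\mathbf{v}_0)\hat f(\mathbf{v}_{2^n-1}) = -1$, which (since both factors are $\pm 1$) gives $\hat f(\mathbf{v}_0) = -\hat f(\mathbf{v}_{2^n-1})$, as required.

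There is no real obstacle here: once equation \eqref{[3]} is in hand, the whole argument is a divisibility/boundedness comparison. The only subtle point worth double-checking is that the ``non-$n$'' part of \eqref{[3]} really is exactly $2 + 2\hat f(\mathbf{v}_0)\hat f(\mathbf{v}_{2^n-1})$ — that is, that the orbit of size $2$ contributes the constant $2$ (as already extracted in the statement of \eqref{[3]}) and not another $\pm 1$ term that could spoil the $\{0,4\}$ dichotomy. This is ensured by the theorem preceding the lemma, where the $|O_n(\mathbf v_k)|=2$ orbit is handled separately using $(\hat f(\mathbf v_k))^2 = 1$, so nothing further needs to be verified.
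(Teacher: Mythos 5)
Your proof is correct and takes essentially the same approach as the paper: both arguments set $M_{0,2^n-1}=0$ via Theorem \ref{Had} and then observe that the last three terms of \eqref{[3]} form an integer multiple of $p$ (you use the multiple of $n=2p$), so the only way the constant part $2+2\hat{f}({\bf v}_0)\hat{f}({\bf v}_{2^n-1})$ can cancel is if $\hat{f}({\bf v}_0)\hat{f}({\bf v}_{2^n-1})=-1$. One tiny slip: $p>2$ only guarantees $n\geq 6$, not $n\geq 10$, but since $6>4$ your size argument goes through unchanged.
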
 
\begin{proof}
From Theorem \ref{Had} $f$ is bent if and only if $M=2^nI_{2^n}$.  Thus if $f$ is bent then $M_{0,2^n-1}=0$.  From \eqref{[3]} we see that in order for $M_{0,2^n-1}$ to be $0$ we need the $2$ which appears from the case ${\bf v}_k={\bf v}_\frac{2^n-1}{3}$ to cancel.  Let 
\begin{eqnarray*}
& &a=\hat{f}({\bf v}_0)\hat{f}({\bf v}_{2^n-1}),\ b=\sum\limits_{\substack{{\bf v}_k\in\mathcal{G}_n \\ |O_n({\bf v}_k)|=n \\ {\bf v}_k\nsim{\bf v}_{2^n-1-k}} }\hat{f}({\bf v}_k)\hat{f}({\bf v}_{2^n-1-k}),\\
& &c=\sum\limits_{\substack{{\bf v}_k\in\mathcal{G}_n \\ |O_n({\bf v}_k)|=\frac{n}{2}}}\hat{f}({\bf v}_k)\hat{f}({\bf v}_{2^n-1-k}), \text{and }d=\sum\limits_{\substack{{\bf v}_k\in\mathcal{G}_n \\ |O_n({\bf v}_k)|=n \\ {\bf v}_k\sim{\bf v}_{2^n-1-k}}}(\hat{f}({\bf v}_k))^2.
\end{eqnarray*}
Assume $f$ is bent, then \eqref{[3]} becomes:
\begin{equation*}
2+2a+4pb+2pc+2pd=0\Rightarrow p(2b+c+d)=-1-a.
\end{equation*} 
If $a=1$, then $p(2b+c+d)=-2$.  Since $p>2$ and $2b+c+d\in\mathbf{Z}$ this is a contradiction.  Thus $a=-1$ and $\hat{f}({\bf v}_0)\hat{f}({\bf v}_{2^n-1})=-1$.  Therefore $\hat{f}({\bf v}_0)=-\hat{f}({\bf v}_{2^n-1})$ since $\hat{f}({\bf v}_i)=\pm 1$ for all $i$.
\end{proof}

The following corollary now proves that full-cycle homogeneous MRS Boolean functions in $2p$ variables cannot be bent.  

\begin{corollary}
\label{hombent1}
Let $n=2p$ where $p$ is an odd prime.  Let $f$ be an MRS boolean function with SANF $x_0x_{a_1}x_{a_2}\ldots x_{a_{d-1}}$.  If the number of monomials in $f$ is $n$, then $f$ is not bent.
\end{corollary}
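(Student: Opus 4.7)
The plan is to apply Lemma~\ref{fhat} directly and derive a contradiction by evaluating $\hat{f}$ at the two constant vectors ${\bf v}_0=(0,\ldots,0)$ and ${\bf v}_{2^n-1}=(1,\ldots,1)$.

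First I would observe that because $f$ is a homogeneous MRS function with SANF $x_0 x_{a_1} \cdots x_{a_{d-1}}$, every monomial of $f$ has positive degree, so no constant term appears. Hence $f({\bf v}_0) = 0$ and consequently $\hat{f}({\bf v}_0) = 1$. Next I would evaluate $f$ at the all-ones vector: every variable is $1$, so each individual monomial evaluates to $1$, and $f({\bf v}_{2^n-1})$ is congruent mod $2$ to the total number of monomials of $f$. By hypothesis $f$ is full-cycle, so it has exactly $n = 2p$ monomials, an even number. Therefore $f({\bf v}_{2^n-1}) = 0$ and $\hat{f}({\bf v}_{2^n-1}) = 1$.

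Putting these two calculations together yields $\hat{f}({\bf v}_0) = \hat{f}({\bf v}_{2^n-1}) = 1$, which directly contradicts Lemma~\ref{fhat}, where we proved that any rotation symmetric bent function in $n = 2p$ variables must satisfy $\hat{f}({\bf v}_0) = -\hat{f}({\bf v}_{2^n-1})$. Hence $f$ cannot be bent.

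There is no real obstacle here: the work has already been done in establishing Lemma~\ref{fhat}, and the corollary is a one-line parity observation about the number of monomials of a full-cycle MRS function. The only thing worth double-checking is that the SANF given truly contains no constant and no linear term (which holds because the SANF $x_0 x_{a_1} \cdots x_{a_{d-1}}$ has degree $d \geq 2$ for a nontrivial MRS bent candidate; even for $d=1$ the function evaluates to $0$ at the zero vector, so the argument still goes through).
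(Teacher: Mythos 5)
Your proof is correct and is essentially the paper's own argument: the paper likewise notes that a full-cycle MRS function has $n=2p$ (even) monomials, so $\hat{f}({\bf v}_0)=\hat{f}({\bf v}_{2^n-1})$, which forces $M_{0,2^n-1}\neq 0$ and hence non-bentness, i.e.\ exactly the contradiction with Lemma~\ref{fhat} that you invoke. Your write-up just makes the parity evaluation at ${\bf v}_0$ and ${\bf v}_{2^n-1}$ explicit, which the paper leaves implicit.
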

\begin{proof}
Using Theorem \ref{Had} we need only show $M_{0,2^n-1}\neq 0$.  If the number of monomials in $f$ is $n$, then we have $\hat{f}({\bf v}_0)=\hat{f}({\bf v}_{2^n-1})$ since $n=2p$ is even.  Thus $M_{0,2^n-1}\neq 0$ so $f$ is not bent.  
\end{proof} 

We can now prove that $f_0$ is the only homogeneous MRS bent function.

\begin{theorem}
\label{f0onlyS}
Let $n=2p$, where $p>2$ is prime.  The only bent homogeneous MRS Boolean function in $n$ variables is  
$f_0=\bigoplus\limits_{i=0}^{p-1}x_ix_{p+i}$. 
\end{theorem}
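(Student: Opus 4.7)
The plan is to partition the homogeneous MRS functions in $n=2p$ variables into full-cycle and short-cycle, and then invoke a result already established for each class. Since $f$ is MRS with SANF $x_0x_{a_1}\cdots x_{a_{d-1}}$, its ANF consists of the full orbit of the tuple $(0,a_1,\ldots,a_{d-1})$ under cyclic shifts, and by Lemma \ref{Ondiv} the size of that orbit divides $n=2p$. Thus either the orbit has size $n$, in which case $f$ is full-cycle and contains $n$ monomials, or the orbit is strictly smaller, in which case $f$ is short-cycle. These two cases are exhaustive and mutually exclusive, and note that an MRS function is automatically homogeneous, so this dichotomy really does cover every $f$ to which the theorem applies.

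In the full-cycle case I would apply Corollary \ref{hombent1}, which states precisely that an MRS Boolean function with $n$ monomials in $n=2p$ variables cannot be bent (the mechanism being that $\hat f({\bf v}_0)=\hat f({\bf v}_{2^n-1})=1$ because $n$ is even, which contradicts Lemma \ref{fhat}). In the short-cycle case I would apply Theorem \ref{shortbent}, which holds for every even $n=2m$ and identifies $f_0=\bigoplus_{i=0}^{m-1}x_ix_{m+i}$ as the unique bent short-cycle MRS Boolean function; specializing $m=p$ gives exactly what is needed here. Lemma \ref{f0bent} confirms that $f_0$ is indeed bent, so it is not excluded by either of the two classification results.

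Combining these observations yields that $f_0$ is the only bent homogeneous MRS Boolean function in $n=2p$ variables, as claimed. The proof is really just a consolidation: the substantive technical content has already been carried out earlier in the paper, with Lemma \ref{fhat} underwriting the full-cycle exclusion and the Theorem \ref{Jnonlin} nonlinearity bound underwriting the short-cycle classification. For this reason I do not anticipate any genuine obstacle; the only item worth double-checking is that the ``orbit size divides $n$'' dichotomy is exhaustive (it is, by Lemma \ref{Ondiv}) and that Theorem \ref{shortbent}, proved in Section \ref{HomRSany} for arbitrary even $n$, is available in the $n=2p$ setting (it is, since no hypothesis beyond $n=2m$ was used).
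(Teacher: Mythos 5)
Your proof is correct and follows essentially the same route as the paper: split the homogeneous MRS functions into full-cycle and short-cycle, dispose of the full-cycle case by Corollary \ref{hombent1}, and settle the short-cycle case by Theorem \ref{shortbent} (with $m=p$), noting that $f_0$ itself is bent. The extra justifications you supply (exhaustiveness of the dichotomy via Lemma \ref{Ondiv}, applicability of Theorem \ref{shortbent} for $n=2p$) are accurate but not points of divergence from the paper's argument.
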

\begin{proof}
Any monomial homogeneous rotation symmetric function, $f$, has SANF $x_0x_{a_1}\ldots x_{a_{d-1}}$ and is either short-cycle or full-cycle.  If $f$ is full-cycle then from Corollary \ref{hombent1}  $f$ is not bent.  If $f$ is short-cycle then from Theorem \ref{shortbent}, $f$ is bent if and only if $f=f_0$.
\end{proof}  

\begin{theorem}
\label{short2form}
Let $n=2p$, where $p>2$ is prime, and let $f$ be a rotation symmetric Boolean function in $n$ variables with SANF $x_0x_{a_1}\cdots x_{a_{d-1}}$.  If $f$ has $p$ monomials, then $d$ is even  and $f$ has SANF given by equation \eqref{[4]}.
\end{theorem}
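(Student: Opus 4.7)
The plan is to reduce the claim to an elementary invariance property of the index set of the SANF. Writing $S=\{0,a_1,\ldots,a_{d-1}\}$ for the set of subscripts appearing in the SANF, the hypothesis that $f$ has exactly $p$ monomials together with Lemma \ref{Ondiv} pins down the orbit size, and then a short involution argument gives both parts of the conclusion simultaneously.

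First I would argue, exactly as in the opening of the proof of Lemma \ref{f0onlyshort}, that since $f$ is short-cycle with $p$ monomials, the list of monomials of $f$ consists of the cyclic shifts
\[
x_0x_{a_1}\cdots x_{a_{d-1}},\ x_1x_{a_1+1}\cdots x_{a_{d-1}+1},\ \ldots,\ x_{p-1}x_{a_1+p-1}\cdots x_{a_{d-1}+p-1},
\]
and that the number of monomials $p$ must divide $n=2p$ by Lemma \ref{Ondiv} (this holds since $p \mid 2p$, so the orbit size being $p$ is consistent). The fact that the $p$-th cyclic shift reproduces the original monomial then forces the equality of multisets
\[
\{0,a_1,\ldots,a_{d-1}\}\ =\ \{p,\,a_1+p,\,\ldots,\,a_{d-1}+p\}\pmod{n}.
\]
In other words, $S$ is invariant under the map $\sigma\colon x\mapsto x+p\pmod{n}$.

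Next I would analyze this involution $\sigma$ on $\{0,1,\ldots,n-1\}$. A fixed point of $\sigma$ would require $p\equiv 0\pmod{2p}$, which is impossible, so $\sigma$ is fixed-point free and partitions $\{0,1,\ldots,n-1\}$ into the $p$ two-element blocks $\{j,\,j+p\}$ for $0\le j\le p-1$. Since $S$ is $\sigma$-invariant, it is a disjoint union of some of these blocks. Hence $|S|=d$ is automatically even, say $d=2g$. Moreover $0\in S$ forces $\{0,p\}\subset S$, and listing the remaining blocks chosen for $S$ as $\{a_1,a_1+p\},\ldots,\{a_{g-1},a_{g-1}+p\}$ with $0<a_1<\cdots<a_{g-1}<p$ yields an SANF of exactly the form \eqref{[4]} (with $m=p$).

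There is no real obstacle here; the only thing one has to be a little careful about is the bookkeeping that (a) short-cycle with $p$ monomials really does give the invariance $S=S+p$ rather than invariance under a smaller shift, and (b) after verifying that $\sigma$ has no fixed points, one correctly rewrites $S$ in the increasing form used in \eqref{[4]}. Both are routine once the involution picture is in place.
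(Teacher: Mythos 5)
Your proposal is correct and follows essentially the same route as the paper: both derive the invariance $\{0,a_1,\ldots,a_{d-1}\}=\{p,a_1+p,\ldots,a_{d-1}+p\}\pmod{n}$ from the $p$-monomial hypothesis and conclude that the index set splits into pairs $\{a_i,a_i+p\}$, giving $d$ even and the form \eqref{[4]}. Your explicit fixed-point-free involution argument merely spells out the pairing step that the paper states more briefly.
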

\begin{proof}
If the number of monomials in $f$ is $p$ then
\begin{equation*}
f=x_0x_{a_1}\cdots x_{a_{d-1}}\oplus x_1x_{a_1+1}\cdots x_{a_{d-1}+1}\oplus\cdots\oplus x_{p-1}x_{a_1+p-1}\cdots x_{a_{d-1}+p-1}
\end{equation*}
Thus $\{0,a_1,\ldots,a_{a_{d-1}}\}=\{p,a_1+p,\ldots,a_{d-1}+p\}$ so each pair $\{x_{a_i},x_{a_i+p}\}$ appears in every monomial and $f$ is of the form in \eqref{[4]}.  The result that $d$ is even follows immediately since each monomial contains the pairs $\{x_{a_i},x_{a_i+p}\}$.
\end{proof}

\begin{theorem}
\label{hombentfull} 
Let $n=2p$, where $p>2$ is prime.  Let $f=\bigoplus\limits_{i=1}^{s}f_i$ where each $f_i$ has SANF $x_0x_{a_{i_1}}\cdots x_{a_{i_{d-1}}}$ and $deg(f)=d$. Let $r$ be the number of $f_i$ that are short-cycle and $l$ be the number of $f_i$ that are long-cycle.  If $f$ is bent then $d$ is even.  Furthermore, if $d=2$ then $f$ contains $f_0$ and if $d\neq 2$ then $r\geq 1$ is odd and $l\geq 1$.
\end{theorem}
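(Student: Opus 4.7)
The plan is to extract a parity constraint on the total number of monomials in $f$ from Lemma \ref{fhat}, and then combine it with the orbit structure of $\mathbb{V}_n$ under rotation. Since each $f_i$ is homogeneous of degree $d\geq 1$, the function $f$ has no constant term, so $\hat f(\mathbf{v}_0)=1$. If $f$ is bent, Lemma \ref{fhat} forces $\hat f(\mathbf{v}_{2^n-1})=-1$, i.e.\ $f(1,1,\ldots,1)=1$. Because the supports of distinct $f_i$ are distinct rotation orbits and hence disjoint, evaluating at the all-ones vector shows that the total number of monomials in $f$ must be odd.

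Next I would classify the contribution of each $f_i$. Every full-cycle $f_i$ contributes $2p$ (even) monomials. By Lemma \ref{Ondiv} a short-cycle $f_i$ has orbit size $1$, $2$, or $p$; orbit size $1$ would force degree $0$ or $n$, which is impossible here, so the only possibilities are size $2$ or size $p$. A size-$2$ orbit is the orbit of an alternating vector and hence gives degree $p$, while a size-$p$ orbit comes from a period-$p$ vector and therefore has even degree (the same conclusion as Theorem \ref{short2form}). Suppose $d=p$ (odd): then size-$p$ short cycles are impossible, so every short-cycle $f_i$ contributes exactly $2$ monomials, and the monomial count becomes $2r+2pl$, which is even, contradicting the parity constraint. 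Hence $d\ne p$, so $d<p$, which forces every short cycle to have orbit size exactly $p$ with $d$ even (Theorem \ref{short2form}). The parity constraint then reads $rp+2pl\equiv 1\pmod 2$, and since $p$ is odd this becomes $r\equiv 1\pmod 2$. In particular $r$ is odd and $r\geq 1$, which together with Theorem \ref{short2form} yields that $d$ is even.

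With $d$ even in hand, the case $d=2$ follows immediately from Corollary \ref{2f0contain}. For the last claim assume $d\ne 2$, so $d\geq 4$. We have already shown $r\geq 1$ is odd; it only remains to show $l\geq 1$. Suppose instead $l=0$, so $f$ is a sum of short-cycle MRS functions. Then Theorem \ref{sumshortfncs} forces one $f_i$ to equal $f_0$, which has degree $2$, contradicting $d\ne 2$. Hence $l\geq 1$.

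The main obstacle is the step ruling out $d=p$: without the orbit-size classification in $\mathbb{V}_{2p}$ one cannot exclude short cycles of odd degree, and the parity argument would be stuck. Once the dichotomy (size $2$ of degree $p$, versus size $p$ of even degree) is established, the remaining work is bookkeeping together with direct appeals to Corollary \ref{2f0contain} and Theorem \ref{sumshortfncs}.
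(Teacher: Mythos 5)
Your proof is correct and takes essentially the same route as the paper: your parity-of-the-total-monomial-count argument is just the paper's use of Lemma \ref{fhat} together with $\hat f({\bf v}_{2^n-1})=\prod_i \hat f_i({\bf v}_{2^n-1})$, and the remaining ingredients (orbit size $2$ forces degree $p$, orbit size $p$ forces even degree as in Theorem \ref{short2form}, Corollary \ref{2f0contain} for $d=2$, Theorem \ref{sumshortfncs} for $l\geq 1$) are exactly the ones the paper uses. One small point: your dismissal of the orbit-size-$1$ case (``degree $0$ or $n$, impossible here'') and the inference ``$d\neq p$, so $d<p$'' both tacitly rely on the bound $\deg(f)\leq n/2$ for bent functions, so you should cite Lemma \ref{degn/2} there, as the paper does in its $\mathrm{GCD}(n,d)=n$ case.
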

\begin{proof} 
Let $f=f_1\oplus f_2\oplus\cdots\oplus f_s$ where each $f_i$ has SANF $x_0x_{a_{i_1}}\ldots x_{a_{i_{d-1}}}$.  Thus $deg(f_i)=d\ \forall i$ and so $deg(f)=d$.  Since $n=2p$ then $\text{GCD}(n,d)=1,2,p,\text{or }n$ and from Lemma \ref{Ondiv} we know that the number of monomials in each $f_i$ is either $1,2,p,\text{or }n$.  
\begin{itemize}
\item[(i)] Suppose the number of monomials in $f_i$ for some $i$ is $1$, then since $f_i$ is rotation symmetric then $f_i=x_0x_1\ldots x_{n-1}$ and $d=n$. 
\item[(ii)] Suppose the number of monomials in $f_i$ for some $i$ is $2$, then $f=x_0x_{a_{i_1}}\ldots x_{a_{i_{d-1}}}\oplus x_1x_{a_{i_1}+1}\ldots x_{a_{i_{d-1}}+1}$.  It is easy to see by induction that the first monomial contains all of the even indices and the second monomial contains all of the odd indices.  Thus $d=p$. 
\item[(iii)] Suppose the number of monomials in $f_i$ for some $i$ is $p$, then from Theorem \ref{short2form} $f_i$ is of the form in equation \eqref{[4]} and $d$ is even.
\end{itemize} 
Thus $f_i$ is a short-cycle function only if $d=p,n,\text{or is even}$.  Let $\text{GCD}(n,d)=1$, then the number of monomials in $f_i$ is $n$ for all $i$.  Thus, $\hat{f_i}({\bf v}_0)=\hat{f_i}({\bf v}_{2^n-1})\ \forall i$ and so $\hat{f}({\bf v}_0)=\hat{f}({\bf v}_{2^n-1})$ and by Lemma \ref{fhat} $f$ is not bent.  If $\text{GCD}(n,d)=p$ then $f_i$ either has $2$ monomials or $n$ monomials for each $i$.  Thus $\hat{f}({\bf v}_0)=\hat{f}({\bf v}_{2^n-1})$ and so $f$ is not bent.  If $\text{GCD}(n,d)=n$ then $d=n$ and $f$ is not bent by Lemma \ref{degn/2}.  Now suppose $\text{GCD}(n,d)=2$, thus $d$ is even.  If $d=2$ then by Corollary \ref{2f0contain} $f$ must contain $f_0$ which is the only degree $2$ short-cycle.  If $d\neq 2$ then by (iii), the short-cycle functions of degree $d$ are of the form in equation \eqref{[4]}.  By Theorem \ref{sumshortfncs} we know that any combination of these short-cycle functions is not bent, thus $f$ must also contain a long-cycle and so $l\geq 1$.  If $r$ is even then,
\begin{equation*}
\hat{f}({\bf v}_0)=\prod\limits_{i=1}^{s}\hat{f_i}({\bf v}_0)=(-1)^r\prod\limits_{i=1}^{s}\hat{f_i}({\bf v}_{2^n-1})=\hat{f}({\bf v}_{2^n-1})
\end{equation*}
since $\hat{f_i}({\bf v}_0)=-\hat{f_i}({\bf v}_{2^n-1})$ when $f_i$ is short-cycle and $\hat{f_i}({\bf v}_0)=\hat{f_i}({\bf v}_{2^n-1})$ when $f_i$ is long-cycle.  Thus by Lemma \ref{fhat} $f$ is not bent.  
\end{proof}

\begin{remark}
The previous Theorem says that any homogeneous RotS bent function must have even degree and contain an odd number of short-cycle MRS functions.  If the degree is greater than $2$ then it must also contain at least one long-cycle MRS function.
\end{remark}

\begin{lemma}
\label{oddind}
Let $n=2p,$ $p$ an odd prime, and $f$ be a homogeneous RotS Boolean function in $n$ variables with $deg(f)=d=2k$ where $k>1$.  Let $f=\bigoplus\limits_{i=1}^{s}f_i\oplus\bigoplus\limits_{i=s+1}^{l}f_i$ where for all $1\leq i\leq s$, $f_i$ is an MRS short-cycle function and for all $s+1\leq i\leq l$, $f_i$ is an MRS full-cycle function.  Suppose $f_i$ has SANF $x_0x_{a_{i_1}}\ldots x_{a_{i_{d-1}}}$.  If $2\leq r\leq d-2$ for all $s+1\leq i\leq l$, where $r$ is the number of odd indices in $\{a_{i_1},\ldots,a_{i_{d-1}}\}$, then $f$ is not bent.
\end{lemma}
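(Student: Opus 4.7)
The plan is to apply Theorem \ref{Jnonlin} with the set $J=\{0,2,4,\ldots,n-2\}\cup\{1\}$ of size $p+1$, consisting of all even indices together with the single odd index $1$. Since $|J|-1=p>n/2-1$, if I can show that no monomial of $f$ has all of its variables indexed by $J$, the theorem will give $\mathcal{N}_f\leq 2^{n-1}-2^{p}<2^{n-1}-2^{n/2-1}$, which by Theorem \ref{Had} precludes $f$ from being bent. An index lies outside $J$ precisely when it is odd and at least $3$, so it suffices to verify that every monomial of $f$ contains at least two odd indices (any two distinct odd indices include one that is $\geq 3$).

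I would check the two-odd-indices property separately for the short-cycle and full-cycle summands. For a short-cycle $f_i$, since $d=2k$ is even with $d<n$, Theorem \ref{short2form} forces the SANF to be $x_0 x_{a_1}\cdots x_{a_{k-1}}x_p x_{a_1+p}\cdots x_{a_{k-1}+p}$ as in equation \eqref{[4]}. Each monomial of $f_i$ is a cyclic shift of this, and its $d$ indices partition into $k$ pairs of shape $\{j+a_c,\,j+a_c+p\}$; since $p$ is odd, the two members of each such pair have opposite parities. Hence every monomial of a short-cycle $f_i$ has exactly $k\geq 2$ odd indices.

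For a full-cycle $f_i$ with $s+1\leq i\leq l$, the SANF monomial $x_0 x_{a_{i_1}}\cdots x_{a_{i_{d-1}}}$ has $r$ odd and $d-r$ even indices (the index $0$ is even). A cyclic shift by an even amount preserves every index's parity, while a shift by an odd amount flips every parity; thus the $n$ monomials of $f_i$ split into $p$ monomials with $(\textrm{odd},\textrm{even})$ count $(r,d-r)$ and $p$ monomials with count $(d-r,r)$. The hypothesis $2\leq r\leq d-2$ yields both $r\geq 2$ and $d-r\geq 2$, so every monomial of a full-cycle $f_i$ also has at least two odd indices.

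Combining the two cases, every monomial of $f$ contains at least two distinct odd indices, at least one of which must exceed $1$ and hence lie outside $J$; Theorem \ref{Jnonlin} then delivers the nonlinearity bound above and contradicts bentness. The main obstacle is the parity bookkeeping under cyclic shifts, and the hypothesis $2\leq r\leq d-2$ is plainly tailored so that neither the profile $(r,d-r)$ nor its parity-flipped counterpart has fewer than two odd entries, which is precisely what this choice of $J$ requires.
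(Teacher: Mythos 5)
Your proof is correct and follows essentially the same route as the paper: the paper also shows that every monomial (of the short-cycles via the form \eqref{[4]}, of the full-cycles via the parity count $r$ or $d-r$, both at least $2$) contains at least two odd indices, and then applies Theorem \ref{Jnonlin} with a set of size $p+1$ consisting of all even indices plus a single odd one (the paper takes $J=\{0,2,\ldots,n-2,n-1\}$ where you take $\{0,2,\ldots,n-2,1\}$, an immaterial difference), obtaining $\mathcal{N}_f\leq 2^{n-1}-2^{p}<2^{n-1}-2^{p-1}$ and contradicting Theorem \ref{Had}. The only tiny blemish is your unexplained assumption $d<n$ when invoking Theorem \ref{short2form}; the paper is equally brisk on this point, and the case $d=n$ is trivial since such an $f$ cannot be bent by Lemma \ref{degn/2}.
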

\begin{proof}
Since the degree of $f$ is even, we know from the proof of Theorem \ref{hombentfull} that any short-cycle is of the form in \eqref{[4]}.  Thus no short-cycle contains a term $x_{j_1}\ldots x_{j_t}$ where $t>1$ and $j_1,\ldots,j_t$ are in the set $J=\{0,2,4,\ldots,n-2,n-1\}$ since $deg(f)\geq 4$.  If the number of odd indices, $r$, in the first monomial of any  full-cycle   in $f$ is $2\leq r\leq d-2$, then each monomial in the full-cycles  has either $r$ odd indices and $d-r$ even indices or $d-r$ odd indices and $r$ even indices.  Since $2\leq r\leq d-2$ then $2\leq d-r\leq d-2$, thus each monomial contains at least $2$ odd indices.  Thus, because $J$ contains only one odd number, $n-1$, then no monomial, $x_{j_1}\ldots x_{j_t}$, in a long-cycle has $x_{j_1},\ldots,x_{j_{t}}\in J$ where $t>1$.  Thus by Theorem \ref{Jnonlin}, $\mathcal{N}_f\leq 2^{n-1}-2^{p}<2^{n-1}-2^{p-1}$ and so $f$ is not bent by Theorem \ref{Had}.
\end{proof}

\begin{remark}
Lemma \ref{oddind} says that if $f=\bigoplus\limits_{i}f_i$ is a homogeneous RotS Boolean function of even degree $d$, $d>2$, in $n=2p$ variables then one of the $f_i$ must have a monomial with exactly one odd index.
\end{remark}

Let $f$ be a homogeneous RotS function of degree $d$ with SANF $\bigoplus\limits_{1\leq i\leq s}\beta_i$, where $\beta_i=x_{k_0^{(i)}}x_{k_2^{(i)}}\cdots x_{k_{d-1}^{(i)}}$ and $k_0^{(i)}=0$ for all $i$.  Define a sequence $s_j^{(i)},j=1,2,\ldots,d$, by $s_j^{(i)}=k_{j}^{(i)}-k_{j-1}^{(i)}$ for $1\leq j\leq d-1$, and $s_d^{(i)}=k_0^{(i)}+n-k_{d-1}^{(i)}$.  Let $s_f$ be the largest distance between two consecutive indices in all of the monomials in $f$, thus $s_f=\max\limits_{i,j}s_j^{(i)}$.  Then from \cite[Th. 13, pp. 1116]{Me10} we have the following:

\begin{theorem}
\label{meng}
Let $f$ be a homogeneous RotS function with degree $d\geq 3$ in $n$ variables.  If $s_f\leq \frac{n}{2}$, then $f$ is not bent.
\end{theorem}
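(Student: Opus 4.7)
My plan is to derive the nonlinearity upper bound $\mathcal{N}_f \leq 2^{n-1} - 2^{n/2}$ via Theorem~\ref{Jnonlin}, which is strictly smaller than the bent value $2^{n-1} - 2^{n/2-1}$ from Theorem~\ref{Had}(iii) and so would contradict $f$ being bent. Achieving that bound amounts to exhibiting a set $J \subseteq \{0, 1, \ldots, n-1\}$ of cardinality $n/2 + 1$ such that no monomial of $f$ of degree at least $2$ has all of its variables inside $J$.

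The natural choice is to take $J$ to be a cyclically contiguous block, e.g.\ $J = \{0, 1, \ldots, n/2\}$. Because $f$ is rotation symmetric, every monomial of $f$ is a cyclic shift of one of the base supports $\beta_i$, so it suffices to show that no cyclic shift of any $\beta_i$ lies inside $J$. A cyclic shift of $\beta_i$ fits inside a contiguous block of size $n/2 + 1$ iff the smallest cyclic arc containing $\beta_i$ has length at most $n/2 + 1$; since the gaps $s_j^{(i)}$ sum to $n$ and deleting the maximal gap leaves a hole of $s_f^{(i)} - 1$ consecutive missing positions, that smallest arc has length $n - s_f^{(i)} + 1$, so the containment condition is precisely $s_f^{(i)} \geq n/2$. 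Under the hypothesis $s_f \leq n/2$ this can happen only in the borderline case $s_f^{(i)} = n/2$.

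When the inequality is strict, $s_f < n/2$, no shift of any $\beta_i$ lies in $J$, Theorem~\ref{Jnonlin} applies with $s = n/2 + 1$, and the contradiction with bentness is immediate. The main obstacle is the boundary case $s_f = n/2$: then there is a monomial whose largest cyclic gap is exactly $n/2$, and a direct count shows that exactly one of its $n$ cyclic shifts sits inside any fixed contiguous block of size $n/2 + 1$, so the contiguous $J$ above cannot work. To close this case I would either perturb $J$ to a non-contiguous subset of the same size, chosen so that its complement simultaneously hits every ``critical'' translate of every borderline $\beta_i$, or exploit the additional observation that a gap of exactly $n/2$ forces the support of $\beta_i$ to contain an antipodal pair $\{k, k + n/2\}$, and then combine this antipodal structure with the parity restrictions of Lemma~\ref{fhat} and the cycle-type dichotomy of Theorem~\ref{hombentfull} to rule out bentness separately in the borderline case.
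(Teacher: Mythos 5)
First, note that the paper does not prove this statement at all: Theorem~\ref{meng} is quoted verbatim from \cite[Th.~13]{Me10}, so there is no internal proof to match and any blind attempt must stand entirely on its own.

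Your argument has a genuine gap precisely at the boundary case $s_f = n/2$, and the two repair strategies you sketch do not close it. The strict case $s_f < n/2$ is fine: then every monomial's minimal covering cyclic arc has length $n - s_f + 1 \geq n/2 + 2$, so no monomial lies in the contiguous block $J=\{0,1,\ldots,n/2\}$, and Theorem~\ref{Jnonlin} with $s=n/2+1$ contradicts Theorem~\ref{Had}(iii). But when $s_f = n/2$ the hypothesis of the theorem still allows bent candidates that your method cannot reach, and your first fallback (perturbing $J$ to a non-contiguous set of size $n/2+1$) provably fails in general. Take $n=8$ and $\beta = x_0x_4x_5$, whose gaps are $4,1,3$, so $s_f = n/2$. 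A set $J$ of size $5$ avoids all eight rotations $\{i,i+4,i+5\}$ exactly when its $3$-element complement $K$ meets every rotation; the element $j$ meets the rotations indexed by $\{j, j+3, j+4\}$, so $K$ would have to cover $\mathbb{Z}_8$ by three translates of $\{0,3,4\}$. A short inclusion--exclusion check (the only disjoint pairs of translates are at index distance $2$, and the two leftover residues then differ by $2$, which is not a difference occurring inside $\{0,3,4\}$) shows no such cover exists. Hence no $J$ of size $n/2+1$ works for this $f$, and Theorem~\ref{Jnonlin} with $s \leq n/2$ only yields $\mathcal{N}_f \leq 2^{n-1}-2^{n/2-1}$, which does not contradict bentness; the borderline case needs an essentially different argument, which is exactly what \cite{Me10} supplies. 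Your second fallback is also not viable as stated: it is only a proposal, not an argument, and it invokes Lemma~\ref{fhat} and Theorem~\ref{hombentfull}, which are proved only for $n=2p$ with $p$ prime, whereas Theorem~\ref{meng} is asserted for arbitrary $n$; moreover, those results constrain only the values $\hat{f}({\bf v}_0)$ and $\hat{f}({\bf v}_{2^n-1})$ and the cycle structure of the summands, and it is not clear how the presence of an antipodal pair $\{k,k+n/2\}$ in one support would combine with them to exclude bentness. So what you have is a correct and clean proof of the weaker statement with $s_f \leq n/2 - 1$, not of the theorem as stated.
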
 

We can use Theorem \ref{meng} to get a useful bound on the degree of any possible homogeneous RotS bent function in $2p$ variables:
\begin{theorem}
Let $f$ be a homogeneous RotS bent function of degree $d\geq 3$ in $n=2p$ variables, $p$ an odd prime. Then $d \leq (p+3)/2.$ 
\end{theorem}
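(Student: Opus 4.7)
The plan is to combine Meng's theorem (Theorem \ref{meng}) with the parity-based obstruction from Lemma \ref{oddind}, localizing both to a single MRS summand. Since $f$ is bent of degree $d\geq 3$, Theorem \ref{hombentfull} first forces $d$ to be even, and hence $d\geq 4$. Meng's theorem then gives $s_f>p$, so some monomial of $f$ has a gap exceeding $p$. As every short-cycle monomial has the symmetric form \eqref{[4]}, whose gaps repeat in $[0,p]$, the large-gap monomial must come from a full-cycle summand $f_i$ of $f$. By rotating the SANF of $f_i$, I may assume its $d$ indices $0=k_0<k_1<\cdots<k_{d-1}$ lie in the arc $\{0,1,\ldots,p-1\}$, i.e.\ the large gap is the wrap-around $s_d=2p-k_{d-1}>p$.

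I will then read off the parity constraints this arc imposes. The arc $\{0,1,\ldots,p-1\}$ contains exactly $(p+1)/2$ even positions and $(p-1)/2$ odd positions, so if $r$ denotes the number of odd indices in the SANF of $f_i$ (i.e.\ the count of odd entries among $k_1,\ldots,k_{d-1}$, since $k_0=0$ is even), I get $r\leq (p-1)/2$ and $d-r\leq (p+1)/2$. Lemma \ref{oddind} now forces, for bent $f$, some full-cycle summand to satisfy $r\in\{0,1,d-1\}$; the key move is to apply this conclusion to the same $f_i$ singled out by Meng. A three-way case split then finishes the argument:
\begin{itemize}
    \item If $r=0$, all $d$ indices are even and lie in the arc, so $d\leq (p+1)/2$.
    \item If $r=1$, the $d-1$ even indices lie in the arc, so $d-1\leq (p+1)/2$, giving $d\leq (p+3)/2$.
    \item If $r=d-1$, the $d-1$ odd indices lie in the arc, so $d-1\leq (p-1)/2$, giving $d\leq (p+1)/2$.
\end{itemize}
In every case $d\leq (p+3)/2$, establishing the theorem.

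The main obstacle is the identification step: Meng only guarantees that some full-cycle summand has a large gap, and Lemma \ref{oddind} only guarantees that some full-cycle summand has $r\in\{0,1,d-1\}$; it is not immediate that a single $f_i$ enjoys both properties. In fact, under the contradictory hypothesis $d>(p+3)/2$, the arc constraints derived above already place $r$ for the large-gap summand squarely in $[2,d-2]$, forcing Lemma \ref{oddind}'s special summand to be a distinct $f_j$ that need not have a gap exceeding $p$. Overcoming this will require either a strengthening of Meng's theorem showing that the special summand itself inherits a large gap, or an application of Theorem \ref{Jnonlin} to a carefully chosen set $J$ of size at least $p+1$ adapted to the joint support of both summands. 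This is where the delicate combinatorial work of the proof resides.
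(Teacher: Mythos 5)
Your argument does not reach the stated bound, and you diagnose the reason yourself: the case analysis requires the summand produced by Theorem \ref{meng} (some full-cycle $f_i$ whose monomials carry a gap exceeding $p$) and the summand produced by Lemma \ref{oddind} (some full-cycle $f_j$ with $r\in\{0,1,d-1\}$) to be one and the same, and nothing in the proposal forces this. Worse, as you yourself compute, under the hypothesis $d>(p+3)/2$ your arc count shows the large-gap summand automatically has $2\le r\le d-2$, so in exactly the situation you must exclude the two summands are necessarily distinct. The three cases $r=0,1,d-1$ are therefore never available for the summand to which the arc estimate applies, and the proposal does not rule out, for instance, a candidate bent $f$ built from one full-cycle summand carrying the large gap with $r\in[2,d-2]$ together with a second full-cycle summand having $r\le 1$ but all gaps at most $p$. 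The preliminary steps (evenness of $d$ via Theorem \ref{hombentfull}, the observation that short-cycles of the form \eqref{[4]} have all gaps at most $p$, and the parity count on an arc of $p$ consecutive indices) are correct, but what you have is a reduction of the theorem to an unproved identification claim, not a proof.

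For comparison, the paper does not route the argument through Lemma \ref{oddind} at all; it makes a single application of Theorem \ref{Jnonlin} with $J=\{0,1,2,4,\ldots,2p-2\}$, that is, all $p$ even indices together with the one odd index $1$, so $|J|=p+1$. The counting step is essentially your arc count transplanted into $J$: a monomial all of whose indices lie in $J$ and which has a gap of length at least $p+1$ can involve at most $p+1-\frac{p-1}{2}=\frac{p+3}{2}$ members of $J$, so under $d>(p+3)/2$ no monomial of $f$ lies entirely inside $J$, and Theorem \ref{Jnonlin} gives $\mathcal{N}_f\le 2^{n-1}-2^{p}<2^{n-1}-2^{p-1}$, contradicting bentness by Theorem \ref{Had}. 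This is precisely the ``carefully chosen set $J$ of size $p+1$'' you anticipate in your closing remark. Note, however, that the difficulty you isolate does not simply vanish there: the paper's count applies the gap requirement of Theorem \ref{meng} to the particular monomial lying inside $J$, whereas Theorem \ref{meng} as quoted only constrains the maximal gap $s_f$ taken over all monomials of $f$. So to finish your proof along these lines you would still need to justify imposing the large-gap condition on the specific summand whose monomial sits inside $J$ (equivalently, on your summand with $r\in\{0,1,d-1\}$) --- the very localization issue you correctly single out as the heart of the matter.
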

\begin{proof}
Suppose $d > (p+3)/2.$ We choose $J = \{0, 1, 2, 4,\ldots, 2p-2\}$ (so $|J| = p+1$ and $J$ has $p$ even elements) in Theorem \ref{Jnonlin}.
The theorem applies since now no monomial $x_0x_{a_1}\ldots x_{a_{d-1}}$ can have all of its variables $x_j$ in $J$ and also have a gap of length $\geq  (n/2)+1$
in its index set (necessary for $f$ to be bent by Theorem \ref{meng}).   Suppose the monomial $x_0x_{a_1}\ldots x_{a_{d-1}}$ has all of its variables $x_j$ in $J$ and $d>\frac{p+3}{2}$.  If there is a gap in the indices greater than or equal to $p+1$, then at least $\frac{p-1}{2}$ of the indices in $J$ do not appear in the monomial.  Thus there are at most $p+1-\frac{p-1}{2}=\frac{p+3}{2}$ possible indices in $J$ which appear in the monomial.  Thus $d\leq\frac{p+3}{2}$, contradiction. Note, for example,  that if the indices are 
$0, 1, \underbrace{2, p+3}_{gap}\, p+5, \ldots, 2p-2$  then we have a gap of
length $p+1$ but only $(p+3)/2$ indices, contradicting our assumption about $d.$ Now Theorem \ref{Jnonlin} gives

$$\mathcal{N}_f \leq 2^{n-1}-2^{p} \leq 2^{n-1} -2^{n/2} < 2^{n-1} - 2^{n/2 - 1},$$
so $f$ is not bent by Theorem \ref{Had}, contradicting our hypothesis.  
\end{proof}  






\color{black}

\color{black}

\section{Nonhomogeneous rotation symmetric bent functions when n=2p}
\label{nonhomRS}

We have already shown in Theorem \ref{sumshortfncs} that any bent function composed only of short-cycle MRS functions must contain $f_0$.  

We can now extend the ideas used in section \ref{HomRSprime} to show that in most cases any bent rotation symmetric Boolean function in $n=2p$, $p>2$ prime, variables must contain $f_0$.

\begin{theorem}
Let $n=2p$ where $p$ is an odd prime.  Let $f=f_1\oplus f_2\oplus\ldots\oplus f_s$ be a rotation symmetric Boolean function where each $f_i$ has SANF $x_0a_{a_1}\ldots x_{a_{d_i-1}}$.  If $d_i=2$ or the number of monomials in each $f_i$ is either $2\text{ or }n$, then $f$ is bent only if $f$ contains $f_0=\bigoplus\limits_{j=0}^{p-1}x_jx_{p+j}$.  If $f$ contains an even number of $f_i$, where the number of monomials in $f_i$ is $p$  and all other $f_i$ contain $2$ or $n$ monomials, then $f$ is not bent.
\end{theorem}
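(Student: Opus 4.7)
The plan is to reduce both assertions to the single constraint from Lemma \ref{fhat}, which forces any bent rotation symmetric function to satisfy $\hat{f}({\bf v}_0)=-\hat{f}({\bf v}_{2^n-1})$. Since $f=\bigoplus_i f_i$ implies $\hat{f}=\prod_i \hat{f_i}$, the whole argument reduces to evaluating each $\hat{f_i}$ at these two extreme inputs and controlling the resulting product.

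The key calculation is immediate: for any MRS summand with SANF $x_0x_{a_{i_1}}\cdots x_{a_{i_{d_i-1}}}$, every monomial in the full ANF vanishes at ${\bf v}_0$, while every monomial equals $1$ at ${\bf v}_{2^n-1}$. Thus $\hat{f_i}({\bf v}_0)=1$ and $\hat{f_i}({\bf v}_{2^n-1})=(-1)^{m_i}$, where $m_i$ denotes the number of monomials of $f_i$. By Lemma \ref{Ondiv}, $m_i$ divides $n=2p$, so $m_i\in\{1,2,p,n\}$; since $p$ is odd while $n$ is even, the sign $\hat{f_i}({\bf v}_{2^n-1})$ equals $-1$ precisely when $m_i\in\{1,p\}$. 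Writing $N$ for the number of summands whose monomial count is odd, this gives $\hat{f}({\bf v}_0)=1$ and $\hat{f}({\bf v}_{2^n-1})=(-1)^N$, so Lemma \ref{fhat} demands $N$ odd whenever $f$ is bent.

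For the first assertion I would invoke Lemma \ref{f0onlyshort} to classify the admissible summands: a degree-$2$ MRS function either has $n$ monomials (full-cycle) or equals $f_0$ with its $p$ monomials. Combined with the alternative hypothesis $m_i\in\{2,n\}$, every allowed $f_i$ has $m_i\in\{2,p,n\}$, and the only way $m_i$ can be odd is to have $f_i=f_0$. Hence $N\in\{0,1\}$, and $N=1$ if and only if $f_0$ appears as a summand. Since bentness forces $N$ odd, this forces $N=1$, meaning $f$ contains $f_0$.

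For the second assertion the hypothesis restricts every $m_i$ to $\{2,p,n\}$, and the number of summands with $m_i=p$ is even by assumption. Therefore $N$ equals that even count, giving $\hat{f}({\bf v}_{2^n-1})=1=\hat{f}({\bf v}_0)$, in contradiction with Lemma \ref{fhat}; hence $f$ cannot be bent. The whole proof is essentially parity bookkeeping after the initial identification of $\hat{f_i}({\bf v}_0)$ and $\hat{f_i}({\bf v}_{2^n-1})$, so I do not foresee any real obstacle; the only care required is to double-check that no overlooked values of $m_i$ (outside $\{1,2,p,n\}$) can arise, which is exactly what Lemma \ref{Ondiv} rules out.
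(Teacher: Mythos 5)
Your proposal is correct and follows essentially the same route as the paper: both reduce the problem to Lemma \ref{fhat} via $\hat{f}=\prod_i\hat{f_i}$, evaluate each summand at ${\bf v}_0$ and ${\bf v}_{2^n-1}$ so that the sign at the all-ones vector is $(-1)^{m_i}$ with $m_i\in\{1,2,p,n\}$ by Lemma \ref{Ondiv}, and use the degree-$2$ classification (Lemma \ref{f0onlyshort}) to conclude that an odd monomial count can only come from $f_0$, giving the contradiction by parity in both assertions. Your bookkeeping with $N$ is just a slightly more explicit phrasing of the paper's sign-cancellation argument.
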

\begin{proof}
Let $f=f_1\oplus f_2\oplus\ldots\oplus f_s$ where for each $f_i$, $d_i=2$ or the number of monomials in $f_i$ is $2\text{ or }n$.  Suppose $f$ is bent and does not contain $f_0$.  Then since $f_0$ is the only short-cycle RotS function with degree $2$ then each $f_i$ contains either $2$ or $n$ monomials.  Thus from the proof of Corollary \ref{hombent1}, $\hat{f_i}({\bf v}_0)=\hat{f_i}({\bf v}_{2^n-1})$ for all $i$.  Thus 
\begin{equation*}
\hat{f}({\bf v}_0)=\prod \hat{f_i}({\bf v}_0)=\prod \hat{f_i}({\bf v}_{2^n-1})=\hat{f}({\bf v}_{2^n-1})
\end{equation*} 
and from Lemma \ref{fhat} we see that $f$ cannot be bent which is a contradiction.  Thus $f$ must contain $f_0$.

Now suppose the number of monomials in $f_1,f_2,\ldots, f_{2r}$ is $p$ and\\ $f_{2r+1},f_{2r+2},\ldots,f_{s}$ contain either $2$ or $n$ monomials.  Then $\hat{f_i}({\bf v}_0)=-\hat{f_i}({\bf v}_{2^n-1})$ for $1\leq i\leq 2r$ and $\hat{f_i}({\bf v}_0)=\hat{f}({\bf v}_{2^n-1})$ for $2r+1\leq i\leq s$.  Thus, 
\begin{eqnarray*}
\hat{f}({\bf v}_0)&=&\prod\limits_{i=1}^{s}\hat{f_i}({\bf v}_0)=\prod\limits_{i=1}^{2r}-\hat{f_i}({\bf v}_{2^n-1})\prod\limits_{i=2r+1}^{s}\hat{f_i}({\bf v}_{2^n-1})=(-1)^{2r}\hat{f}({\bf v}_{2^n-1})\\
&=&\hat{f}({\bf v}_{2^n-1}).
\end{eqnarray*}
Thus from Lemma \ref{fhat}, $f$ cannot be bent.
\end{proof}

\begin{remark}
If $f_i$ has $1$ monomial, for some $i$, then $deg(f_i)=n$.  Thus $deg(f)=n$ and so $f$ is not bent from Lemma \ref{degn/2}.
\end{remark}

\begin{remark}
The previous Theorem says that any nonhomogeneous RotS bent function must contain an odd number of short-cycle MRS functions of even degree and at least one function which is a long-cycle MRS function.
\end{remark}

\end{document}